\def\ps@pprintTitle{%
   \let\@oddhead\@empty
   \let\@evenhead\@empty
   \def\@oddfoot{\reset@font\hfil\thepage\hfil}
   \let\@evenfoot\@oddfoot
}
\newtheorem{theorem}{Theorem}
\newtheorem{lemma}[theorem]{Lemma}
\newtheorem{open}[theorem]{Open Problem}
\newtheorem{example}[theorem]{Example}
\newcommand\myatop[2]{\genfrac{}{}{0pt}{}{#1}{#2}}
\newcommand{\lcm}{{\mathrm{lcm}}}
\newcommand{\tr}{{\mathrm{Tr}}}
\newcommand{\gf}{{\mathrm{GF}}}
\newcommand{\Aut}{{\mathrm{Aut}}} 
\newcommand{\PAut}{{\mathrm{PAut}}} 
\newcommand{\MAut}{{\mathrm{MAut}}} 
\newcommand{\GAut}{{\mathrm{Aut}}}
\newcommand{\Sym}{{\mathrm{Sym}}}
\newcommand{\wt}{{\mathtt{wt}}}
\newcommand{\Z}{\mathbb{{Z}}}
\newcommand{\m}{\mathbb{M}}
\newcommand{\cP}{{\mathcal{P}}} 
\newcommand{\cB}{{\mathcal{B}}}
\newcommand{\C}{{\mathsf{C}}}
\newcommand{\cR}{{\mathcal{R}}}
\newcommand{\bc}{{\mathbf{c}}} 
\newcommand{\bg}{{\mathbf{g}}}
\newcommand{\bone}{{\mathbf{\bar{1}}}}
\newcommand{\bD}{{\mathbb{D}}}
\newcommand{\GA}{{\mathrm{GA}}}
\begin{document}

\begin{frontmatter}




\title{Linear codes of $2$-designs associated with subcodes
of the ternary generalized Reed-Muller codes}

\tnotetext[fn1]{C. Ding's research was supported by the Hong Kong Research Grants Council,
Proj. No. 16300418. C. Tang was supported by National Natural Science Foundation of China (Grant No.
11871058) and China West Normal University (14E013, CXTD2014-4 and the Meritocracy Research
Funds).}

\author[cding]{Cunsheng Ding}
\ead{cding@ust.hk}
\author[cmt]{Chunming Tang}
\ead{tangchunmingmath@163.com}
\author[vdt]{Vladimir D. Tonchev}
\ead{tonchev@mtu.edu}

\address[cding]{Department of Computer Science and Engineering, The Hong Kong University of Science and Technology, Clear Water Bay, Kowloon, Hong Kong, China}
\address[cmt]{School of Mathematics and Information, China West Normal University, Nanchong, Sichuan,  637002, China}
\address[vdt]{
Department of Mathematical Sciences, Michigan Technological University, 
Houghton, Michigan 49931, USA
}




\begin{abstract} 
In this paper, the 3-rank of the incidence matrices of 2-designs 
supported by the minimum weight codewords in a family of ternary linear codes
considered in [C. Ding, C. Li, Infinite families of 2-designs and 3-designs from linear codes,   
Discrete Mathematics 340(10) (2017) 2415--2431] are computed.
 A lower bound on the minimum distance of the ternary codes spanned 
by the incidence matrices of these designs
is derived, and it is proved that the codes are subcodes of 
the 4th order generalized Reed-Muller codes.

\end{abstract}

\begin{keyword}
Cyclic code \sep linear code \sep Reed-Muller code \sep $t$-design.

\MSC  05B05 \sep 51E10 \sep 94B15 

\end{keyword}

\end{frontmatter}


\section{Introduction}

Let $\cP$ be a set of $v \ge 1$ elements, and let $\cB$ be a set of $k$-subsets of $\cP$, where $k$ is
a positive integer with $1 \leq k \leq v$. Let $t$ be a positive integer with $t \leq k$. The pair
$\bD = (\cP, \cB)$ is called a $t$-$(v, k, \lambda)$ {\em design\index{design}}, or simply {\em $t$-design\index{$t$-design}}, if every $t$-subset of $\cP$ is contained in exactly $\lambda$ elements of
$\cB$. The elements of $\cP$ are called points, and those of $\cB$ are referred to as blocks.
We usually use $b$ to denote the number of blocks in $\cB$.
  A $t$-design is called {\em simple\index{simple}} if $\cB$ does not contain
any repeated blocks.
In this paper, we consider only simple $t$-designs with $v > k > t$.
A $t$-$(v,k,\lambda)$ design is referred to as a 
{\em Steiner system\index{Steiner system}} if $t \geq 2$ and $\lambda=1$, 
and is denoted by $S(t,k, v)$. 

Let $\bD = (\cP, \cB)$ be a $t$-$(v, k, \lambda)$ design with $b \ge 1$ blocks. 
The points of $\cP$ are usually indexed with $p_1,p_2,\cdots,p_v$, and the 
blocks of $\cB$ 
are normally denoted by $B_1, B_2, \cdots, B_b$.
 The {\em incidence matrix\index{incidence matrix}} $M_\bD=(m_{ij})$ of $\bD$ 
is a $b \times v$ matrix where $m_{ij}=1$ if  $p_j$ is on $B_i$ 
and $m_{ij}=0$ otherwise.
The $p$-rank of a design $\bD$ is defined as the rank of its incidence matrix
over a finite field of characteristic $p$.
 The binary matrix $M_{\bD}$ can be viewed as a matrix over $\gf(q)$ for any 
prime power $q$, and its row vectors span a linear code of length $v$ 
over $\gf(q)$, which is 
denoted by $\C_q(\bD)$ and called the \emph{code} of $\bD$ over $\gf(q)$. 
It is known that the $p$-rank of a $t$-$(v,k,\lambda)$ design
with $t\ge 2$ can be smaller than $v-1$ (hence, the dimension of 
$\C_q(\bD)$, where $q=p^t$, can be smaller than $v-1$),
only if $p$ divides $\lambda_1 - \lambda_2$, where $\lambda_i$
denotes the number of blocks of $\bD$ that contain $i$ points
($i=1, 2$) (cf.  \cite{Hamada73}, \cite[Theorem 1.86, page 686]{Tonchevhb}.)

We assume that the reader is familiar with the basics of linear codes and cyclic 
codes.
Throughout this paper, we denote the dual code of $\C$ by $\C^\perp$, 
and the extended code of $\C$ by $\overline{\C}$.
 
Let $\C$ be a $[v, \kappa, d]$ linear code over $\gf(q)$. Let $A_i:=A_i(\C)$, which denotes the
number of codewords with Hamming weight $i$ in $\C$, where $0 \leq i \leq v$. The sequence 
$(A_0, A_1, \cdots, A_{v})$ is
called the \textit{weight distribution} of $\C$, and $\sum_{i=0}^v A_iz^i$ is referred to as
the \textit{weight enumerator} of $\C$. For each $k$ with $A_k \neq 0$,  let $\cB_k$ denote
the set of the supports of all codewords with Hamming weight $k$ in $\C$, where the coordinates of a codeword
are indexed by $(p_1, \ldots, p_v)$. Let $\cP=\{p_1, \ldots, p_v\}$.  The pair $(\cP, \cB_k)$
may be a $t$-$(v, k, \lambda)$ design for some positive integer $\lambda$, which is called a 
\emph{support design} of the code, and is denoted by $\bD_k(\C)$. In such a case, we say that the code $\C$ holds a $t$-$(v, k, \lambda)$ 
design\footnote{More generally, a code $\C$ holds (or supports) a 
$t$-$v,k,\lambda)$ design 
$\bD$ if every block of $\bD$ is the support of some codeword of
$\C$ \cite{Tonchev}.}.


While most linear codes over finite fields do not hold $t$-designs, some linear codes do hold 
$t$-designs for $t \geq 1$. Studying the linear codes of $t$-designs has been a topic of research for a long time 
\cite{AK92,AK98,AM69,CH92,Dingbk15,HP03,KP95,Tonc93,Tonchev,Tonchevhb}. The objective of this paper 
is to study the linear codes of a family of $2$-designs held in a class of ternary linear codes.  It will 
be shown that these codes are subcodes of the fourth-order generalized Reed-Muller codes 
and support new $2$-designs.

\section{Auxiliary results} 

In this section, we present some auxiliary results that will be needed in later sections. 

\subsection{Designs from linear codes via the Assmus-Mattson Theorem}

The following theorem, proved by Assumus and Mattson \cite{AM69},
 shows that the pair $(\cP, \cB_k)$ defined by 
a linear code is a $t$-design under certain conditions.

\begin{theorem}[Assmus-Mattson Theorem]\label{thm-designAMtheorem} (\cite{AM69}, \cite[p. 303]{HP03})
Let $\C$ be a $[v,k,d]$ code over $\gf(q)$. Let $d^\perp$ denote the minimum distance of $\C^\perp$. 
Let $w$ be the largest integer satisfying $w \leq v$ and 
$$ 
w-\left\lfloor  \frac{w+q-2}{q-1} \right\rfloor <d. 
$$ 
Define $w^\perp$ analogously using $d^\perp$. Let $(A_i)_{i=0}^v$ and $(A_i^\perp)_{i=0}^v$ denote 
the weight distribution of $\C$ and $\C^\perp$, respectively. Fix a positive integer $t$ with $t<d$, and 
let $s$ be the number of $i$ with $A_i^\perp \neq 0$ for $0 \leq i \leq v-t$. Suppose $s \leq d-t$. Then 
\begin{itemize}
\item the codewords of weight $i$ in $\C$ hold a $t$-design provided $A_i \neq 0$ and $d \leq i \leq w$, and 
\item the codewords of weight $i$ in $\C^\perp$ hold a $t$-design provided $A_i^\perp \neq 0$ and 
         $d^\perp \leq i \leq \min\{v-t, w^\perp\}$. 
\end{itemize}
\end{theorem}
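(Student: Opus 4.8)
The plan is to reduce the design assertion to the statement that the weight distribution of every shortening of $\C$ at $t$ coordinates is independent of the chosen coordinates, and then to force this independence by duality together with the hypothesis $s \le d-t$. First I would fix a $t$-subset $T \subseteq \cP$ and consider the shortened code $\C_0(T) = \{\bc \in \C : c_j = 0 \text{ for all } p_j \in T\}$, viewed as a code of length $v-t$. Writing $\beta_i(T)$ for the number of weight-$i$ codewords of $\C$ whose support avoids $T$ (equivalently, the number of weight-$i$ words of $\C_0(T)$), a standard inclusion–exclusion argument shows that, for a fixed weight $i$, the supports of the weight-$i$ codewords of $\C$ form a $t$-design if and only if $\beta_i(T)$ does not depend on $T$. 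It therefore suffices to prove that the weight distribution of $\C_0(T)$ in the range $d \le i \le w$ is the same for every $t$-subset $T$.

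Second, I would pass to the dual code. The dual of the shortened code $\C_0(T)$ is the code obtained by puncturing $\C^\perp$ on the coordinates of $T$, and by the MacWilliams identities the weight enumerator of $\C_0(T)$ is completely determined by that of this punctured dual. The role of the hypotheses is to rigidify the punctured dual: since $t<d$ and the number $s$ of indices $i$ with $A_i^\perp \neq 0$ in the range $0 \le i \le v-t$ is small, the portion of $\C^\perp$ that survives puncturing is tightly constrained, independently of which $t$ positions are deleted.

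Third — the technical core — I would invoke the Pless power moment identities. The first several power moments $\sum_j j^\ell \beta_j(T)$ of the shortened weight distribution can be written purely in terms of $v$, $t$, the weights of $\C$, and the dual weight distribution $(A_i^\perp)$, with no dependence on $T$. The inequality $s \le d-t$ guarantees that the number of such moment equations is at least the number of distinct weights $\ge d$ that can occur among codewords of $\C_0(T)$ in the relevant range. The associated linear system is of Vandermonde type, hence nonsingular, so it admits a unique solution; consequently $\beta_i(T)$, and the whole weight distribution of $\C_0(T)$ for $d \le i \le w$, is forced and is independent of $T$. The quantity appearing in the definition of $w$ is exactly what ensures that codewords of weight up to $w$ are neither lost nor conflated under this shortening-and-transform process, so the argument stays valid precisely on $[d,w]$.

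Combining the three steps yields the first conclusion: for each $i$ with $A_i \neq 0$ and $d \le i \le w$, the weight-$i$ supports form a $t$-design. The second conclusion, for $\C^\perp$, follows by the symmetric argument obtained by interchanging the roles of $\C$ and $\C^\perp$ (shortening $\C^\perp$ and puncturing $\C$), the admissible range now being controlled by $d^\perp$ and $w^\perp$ and truncated at $\min\{v-t, w^\perp\}$. I expect the main obstacle to be the uniqueness step of the third paragraph: one must set up the moment system with care, determine exactly which weights a codeword of $\C_0(T)$ can take, and verify the rank (nonsingularity) condition, so that $s \le d-t$ genuinely forces a single $T$-independent solution rather than merely bounding the number of free parameters.
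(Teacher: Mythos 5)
The paper does not prove this statement: Theorem~\ref{thm-designAMtheorem} is the classical Assmus--Mattson theorem, quoted verbatim with citations to \cite{AM69} and \cite[p.~303]{HP03}, so there is no in-paper argument to compare yours against. Judged on its own, your outline follows the standard textbook proof (shortening/puncturing duality, Pless power moments, a Vandermonde-type uniqueness argument), and that strategy does work. However, two steps as you have written them would not go through without repair. First, in your third paragraph the duality is wired the wrong way around. The power moments of the shortened code $\C_0(T)=\C_T$ are controlled by the low-weight coefficients of its dual, which is the \emph{punctured} dual $(\C^\perp)^T$; those coefficients are not a priori independent of $T$, because puncturing lowers weights by amounts that depend on how supports meet $T$, so you cannot claim the moments of $\C_T$ are expressible ``purely in terms of $v$, $t$, \dots and $(A_i^\perp)$.'' The argument that actually closes is the mirror image: since $t<d$, the punctured code $\C^T$ has minimum distance at least $d-t$, so the coefficients $A_1(\C^T)=\cdots=A_{d-t-1}(\C^T)=0$ together with $A_0(\C^T)=1$ give $d-t$ moment equations for the weight distribution of $(\C^T)^\perp=(\C^\perp)_T$, which has at most $s\le d-t$ unknown nonzero weights; the Vandermonde step then pins down the weight distribution of the \emph{shortened dual} independently of $T$. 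That yields the second bullet directly, and the first bullet for $\C$ requires a further counting/induction step after transforming back by MacWilliams.

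Second, your description of the role of $w$ is not correct. The condition $w-\lfloor (w+q-2)/(q-1)\rfloor<d$ has nothing to do with codewords being ``lost or conflated under shortening''; it guarantees that any two codewords of weight $i\le w$ with the same support are scalar multiples of one another (otherwise a suitable $\gf(q)$-linear combination would have weight below $d$). This is exactly what lets you pass from counts of \emph{codewords} containing a given $t$-set, which is what the moment argument controls, to counts of \emph{blocks} (supports), which is what the design property is about; without it the conclusion fails for weights above $w$. A smaller point: your ``if and only if'' in step one needs the avoidance counts to be independent of the chosen $j$-subset for every $j\le t$, not only $j=t$, before M\"obius inversion gives the design property; this is harmless because the hypotheses for $t$ imply those for each smaller $j$ (one checks $s_j\le s+(t-j)\le d-j$), but it should be said.
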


The Assmus-Mattson Theorem is a very useful tool for constructing $t$-designs from linear codes, 
and has been successfully employed to construct infinitely many $2$-designs and $3$-designs \cite{Dingbk18}.

\subsection{Designs from linear codes via the automorphism group}


The set of coordinate permutations that map a code $\C$ to itself forms a group, which is referred to as 
the \emph{permutation automorphism group\index{permutation automorphism group of codes}} of $\C$
and denoted by $\PAut(\C)$. If $\C$ is a code of length $n$, then $\PAut(\C)$ is a subgroup of the 
\emph{symmetric group\index{symmetric group}} $\Sym_n$.

A \emph{monomial matrix\index{monomial matrix}} over $\gf(q)$ is a square matrix having exactly one 
nonzero element of $\gf(q)$  in each row and column. A monomial matrix $M$ can be written either in 
the form $DP$ or the form $PD_1$, where $D$ and $D_1$ are diagonal matrices and $P$ is a permutation 
matrix. 

The set of monomial matrices that map $\C$ to itself forms the group $\MAut(\C)$,  which is called the 
\emph{monomial automorphism group\index{monomial automorphism group}} of $\C$. Clearly, we have 
$$
\PAut(\C) \subseteq \MAut(\C).
$$

The \textit{automorphism group}\index{automorphism group} of $\C$, denoted by $\GAut(\C)$, is the set 
of maps of the form $M\gamma$, 
where $M$ is a monomial matrix and $\gamma$ is a field automorphism, that map $\C$ to itself. In the binary 
case, $\PAut(\C)$,  $\MAut(\C)$ and $\GAut(\C)$ are the same. If $q$ is a prime, $\MAut(\C)$ and 
$\GAut(\C)$ are identical. In general, we have 
$$ 
\PAut(\C) \subseteq \MAut(\C) \subseteq \GAut(\C). 
$$

By definition, every element in $\GAut(\C)$ is of the form $DP\gamma$, where $D$ is a diagonal matrix, 
$P$ is a permutation matrix, and $\gamma$ is an automorphism of $\gf(q)$.   
The automorphism group $\GAut(\C)$ is said to be $t$-transitive if for every pair of $t$-element ordered 
sets of coordinates, there is an element $DP\gamma$ of the automorphism group $\GAut(\C)$ such that its 
permutation part $P$ sends the first set to the second set.

The next theorem gives a 
sufficient condition for a linear code to hold $t$-designs \cite[p. 308]{HP03}. 
 
\begin{theorem}\label{thm-designCodeAutm}
Let $\C$ be a linear code of length $n$ over $\gf(q)$ where $\GAut(\C)$ is $t$-transitive. Then the codewords of any weight $i \geq t$ of $\C$ hold a $t$-design.
\end{theorem}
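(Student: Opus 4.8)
The plan is to show that the action of the permutation part of $\GAut(\C)$ on the $t$-element subsets of coordinates is rich enough to force every weight class of codewords to split into orbits that each distribute uniformly over $t$-subsets of points. First I would fix a weight $i \geq t$ with $A_i \neq 0$ and let $\cB_i$ denote the set of supports of the weight-$i$ codewords, so that $(\cP, \cB_i)$ is the candidate design. The key observation is that if $\varphi = DP\gamma \in \GAut(\C)$, then $\varphi$ maps any codeword $\bc$ to another codeword $\varphi(\bc)$ of the same Hamming weight, because multiplying coordinates by nonzero field elements (the diagonal $D$) and applying the field automorphism $\gamma$ both preserve the property of a coordinate being zero or nonzero, and the permutation $P$ merely rearranges the coordinates. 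Consequently the \emph{support} of $\varphi(\bc)$ is exactly $P(\support(\bc))$, so the permutation part $P$ carries $\cB_i$ to itself. Thus $\cB_i$ is invariant under the group $G$ consisting of the permutation parts of the elements of $\GAut(\C)$.

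With this in hand the argument becomes a counting argument on orbits. Fix any two $t$-subsets $T_1, T_2$ of $\cP$. Because $\GAut(\C)$ is $t$-transitive, there is an element of $G$ whose permutation sends $T_1$ to $T_2$ (as ordered, hence as unordered, sets). For a fixed $t$-subset $T$, let $\lambda_T$ be the number of blocks $B \in \cB_i$ with $T \subseteq B$. Since $G$ permutes $\cB_i$ and sends $T_1$ to $T_2$, it sets up a bijection between the blocks containing $T_1$ and the blocks containing $T_2$; hence $\lambda_{T_1} = \lambda_{T_2}$. As $T_1, T_2$ were arbitrary, the number $\lambda_T$ is independent of $T$, which is precisely the defining property of a $t$-$(v, i, \lambda)$ design. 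Here $v = n$ is the length of the code and $k = i$ is the common block size, with $\lambda$ equal to this common value.

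The one point that needs care, and which I expect to be the main technical obstacle, is the passage from the $t$-transitivity hypothesis (stated for \emph{ordered} $t$-tuples) to the invariance statement for \emph{unordered} $t$-subsets, together with ensuring that every $t$-subset genuinely occurs inside some block so that $\lambda \geq 1$ and the design is nontrivial. The first is immediate since an ordered-tuple transitivity implies transitivity on unordered sets by forgetting the order; the second follows because $t$-transitivity forces transitivity on $t$-subsets, so all $\lambda_T$ coincide, and if this common value were $0$ then $\cB_i$ would be empty, contradicting $A_i \neq 0$. One should also confirm that $\cB_i$ contains no repeated blocks issue beyond what is needed: distinct supports are what populate $\cB_i$, and the uniform-covering conclusion holds regardless of multiplicities, so the combinatorial design structure is exactly the $t$-design claimed.
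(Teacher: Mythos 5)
Your argument is correct and is the standard proof of this fact: the diagonal and field-automorphism parts of an element of $\GAut(\C)$ preserve the zero/nonzero pattern, so the permutation parts act on the set $\cB_i$ of supports, and $t$-transitivity then forces the number $\lambda_T$ of blocks through a $t$-subset $T$ to be constant. The paper itself states this theorem only with a citation to Huffman--Pless and gives no proof, so there is nothing to contrast; your write-up, including the remarks on passing from ordered tuples to unordered sets and on $\lambda \geq 1$, matches the textbook argument.
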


\subsection{The generalized Reed-Muller codes}

The codes of a family of $2$-designs held in a class of affine-invariant codes that will be studied in 
this paper are in fact subcodes of the fourth-order generalized Reed-Muller ternary codes. 
Hence, we review these codes  and some of their properties in this section. 

Let $\ell$ be a positive integer with $1 \leq \ell <(q-1)m$. The $\ell$-th order 
\emph{punctured generalized Reed-Muller code}\index{punctured generalized Reed-Muller code} 
$\cR_q(\ell, m)^*$ over $\gf(q)$ is the cyclic code of length $n=q^m-1$ with generator polynomial 
\begin{eqnarray}\label{eqn-generatorpolyPGRMcode}
g(x) = \sum_{\myatop{1 \leq j \leq n-1}{ \wt_q(j) < (q-1)m-\ell}} (x - \alpha^j), 
\end{eqnarray}
where $\alpha$ is a generator of $\gf(q^m)^*$ \cite{AK98}. Since $\wt_q(j)$ is a constant function on 
each $q$-cyclotomic coset modulo $n=q^m-1$, $g(x)$ is a polynomial over $\gf(q)$. 

The parameters of the punctured generalized Reed-Muller code $\cR_q(\ell, m)^*$ are known 
and summarized in the next theorem. 

\begin{theorem}\label{thm-GPRMcode}  \cite{AK98} 
For any $\ell$ with $0 \leq \ell <(q-1)m$, $\cR_q(\ell, m)^*$ is a cyclic code over $\gf(q)$ 
with length $n=q^m-1$, dimension 
$$ 
\kappa=\sum_{i=0}^\ell \sum_{j=0}^{m} (-1)^j \binom{m}{j} \binom{i-jq+m-1}{i-jq} 
$$
and minimum weight $d=(q-\ell_0)q^{m-\ell_1-1}-1$, where $\ell=\ell_1(q-1)+\ell_0$ and 
$0 \leq \ell_0 < q-1$. 
\end{theorem}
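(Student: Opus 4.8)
The plan is to treat the three assertions --- that $\cR_q(\ell,m)^*$ is cyclic of length $n=q^m-1$, the dimension formula, and the minimum-distance formula --- separately, in increasing order of difficulty. The first is immediate from the definition: every factor of $g(x)$ has the form $(x-\alpha^j)$ with $\alpha$ of order $n$, so $g(x)\mid x^n-1$ and the code is cyclic of length $n$; moreover $\wt_q$ is constant on each $q$-cyclotomic coset modulo $n$, so the root set of $g$ is a union of cosets and $g(x)\in\gf(q)[x]$, as claimed. Since the dimension of a cyclic code is $\kappa=n-\deg g$, everything reduces to counting $\deg g=|T|$, where $T=\{\,j:\ 1\le j\le n-1,\ \wt_q(j)<(q-1)m-\ell\,\}$.

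For the dimension I would pass to base-$q$ digit vectors. Writing each $j\in\{0,1,\dots,q^m-1\}$ with $m$ digits, the complementation map $c\colon j\mapsto (q^m-1)-j$ replaces every digit $j_k$ by $(q-1)-j_k$, so $\wt_q(c(j))=(q-1)m-\wt_q(j)$. Hence $c$ is a bijection from $S:=\{j:\wt_q(j)<(q-1)m-\ell\}$ onto $\{j:\wt_q(j)>\ell\}$, giving $|S|=q^m-\sum_{i=0}^{\ell}N_i$, where $N_i$ is the number of digit vectors of digit-sum $i$. Since $0\in S$ (as $\ell<(q-1)m$) while $q^m-1\notin S$, we get $|T|=|S|-1$, and therefore
$$\kappa=n-|T|=(q^m-1)-\Bigl(q^m-1-\sum_{i=0}^\ell N_i\Bigr)=\sum_{i=0}^\ell N_i.$$
Finally $N_i$ counts the solutions of $a_1+\dots+a_m=i$ with $0\le a_k\le q-1$, which by inclusion--exclusion on the events $a_k\ge q$ equals $\sum_{j=0}^m(-1)^j\binom{m}{j}\binom{i-jq+m-1}{i-jq}$, yielding the stated formula.

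For the minimum distance I would exploit the identification of $\cR_q(\ell,m)^*$ with the affine generalized Reed--Muller code $\cR_q(\ell,m)$ of length $q^m$ (evaluations of reduced polynomials of total degree $\le\ell$ at all points of $\gf(q)^m$), punctured at the origin. Granting the classical value $d_{\mathrm{aff}}=(q-\ell_0)q^{m-\ell_1-1}$ for the minimum weight of $\cR_q(\ell,m)$, the punctured distance is pinned from both sides. For the upper bound I exhibit the codeword $f=\prod_{i=1}^{\ell_1}(1-x_i^{q-1})\prod_{j=1}^{\ell_0}(x_{\ell_1+1}-a_j)$ with the $a_j$ distinct and nonzero: it has total degree $\ell_1(q-1)+\ell_0=\ell$, its support has size $(q-\ell_0)q^{m-\ell_1-1}$, and it does not vanish at the origin, so deleting that coordinate leaves weight $(q-\ell_0)q^{m-\ell_1-1}-1$. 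For the lower bound, deleting one coordinate lowers weight by at most $1$, and since $d_{\mathrm{aff}}=(q-\ell_0)q^{m-\ell_1-1}\ge 2$ no nonzero codeword is supported solely at the origin, so the puncturing map is injective and every nonzero punctured word has weight $\ge d_{\mathrm{aff}}-1$. The two bounds coincide, giving $d=(q-\ell_0)q^{m-\ell_1-1}-1$.

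The main obstacle is the input $d_{\mathrm{aff}}=(q-\ell_0)q^{m-\ell_1-1}$ itself --- equivalently, that a nonzero reduced polynomial of total degree $\le\ell$ has at least $(q-\ell_0)q^{m-\ell_1-1}$ nonzeros on $\gf(q)^m$. I would prove this by induction on $m$, grouping a minimal polynomial by its degree in a distinguished variable and applying the one-variable bound fibrewise; the delicate points are the boundary case $\ell_0=0$ and the bookkeeping when peeling off a variable forces a drop in $\ell_1$. This is precisely the Kasami--Lin--Peterson estimate underlying Theorem~\ref{thm-GPRMcode}, and it is the one genuinely nontrivial ingredient: the cyclicity, length, and dimension claims are essentially formal once the digit-vector bijection is in hand.
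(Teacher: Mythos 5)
This theorem is stated in the paper with a citation to Assmus--Key and no proof is given, so there is no in-paper argument to compare against; your proposal has to stand on its own, and for the most part it does. The cyclicity/length observations and the dimension count are complete and correct: the digit-complementation involution $j\mapsto (q^m-1)-j$ sends $\wt_q(j)$ to $(q-1)m-\wt_q(j)$, your bookkeeping of the endpoints $j=0$ and $j=q^m-1$ is right (note $0\in S$ exactly because $\ell<(q-1)m$), and the inclusion--exclusion count of bounded compositions reproduces the stated $\kappa$. The minimum-distance part is also correctly structured --- the explicit codeword $\prod_{i\le\ell_1}(1-x_i^{q-1})\prod_{j\le\ell_0}(x_{\ell_1+1}-a_j)$ gives the upper bound, and injectivity of puncturing (from $d_{\mathrm{aff}}\ge 2$) gives the matching lower bound --- but it rests on two classical inputs you only gesture at: (i) the identification of the cyclic code defined by the $q$-weight condition on exponents with the origin-punctured evaluation code (this is the Kasami--Lin--Peterson/Delsarte correspondence, and is itself not formal), and (ii) the affine minimum weight $(q-\ell_0)q^{m-\ell_1-1}$, for which you sketch an induction but do not carry out the delicate boundary cases you yourself identify. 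Since the paper likewise imports the whole theorem from the literature, this is an acceptable division of labor, but a self-contained proof would need both (i) and (ii) written out; everything you do prove is sound.
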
 

For $0 \leq \ell < m(q-1)$, the code $(\cR_q(\ell, m)^*)^\perp$ is the cyclic code with generator polynomial 
\begin{eqnarray}\label{eqn-dualPGRMcodegenerator}
g^\perp(x) = \sum_{\myatop{0 \leq j \leq n-1}{ \wt_q(j) < \ell}} (x - \alpha^j), 
\end{eqnarray}
where $\alpha$ is a generator of $\gf(q^m)^*$. In addition, 
$$ 
(\cR_q(\ell, m)^*)^\perp = (\gf(q)\bone)^\perp \cap \cR_q(m(q-1)-1-\ell, m)^*, 
$$
where $\bone$ is the all-one vector in $\gf(q)^n$ and $\gf(q)\bone$ denotes the code over $\gf(q)$ with 
length $n$ generated by $\bone$.  

The parameters of the dual of the punctured generalized Reed-Muller code are summarized as follows \cite[Section 5.4]{AK92}. 
For $0 \leq \ell < m(q-1)$, the code $(\cR_q(\ell, m)^*)^\perp$ has length $n=q^m-1$, dimension 
$$ 
\kappa=n-\sum_{i=0}^\ell \sum_{j=0}^{m} (-1)^j \binom{m}{j} \binom{i-jq+m-1}{i-jq},  
$$
and minimum weight 
\begin{eqnarray}\label{eqn-lbmwtdualGPRMcode}
d \geq (q-\ell'_0)q^{m-\ell'_1-1},
\end{eqnarray}
where $m(q-1)-1-\ell=\ell'_1 (q-1)+\ell'_0$ and 
$0 \leq \ell'_0 < q-1$. 

The generalized Reed-Muller code $\cR_q(\ell, m)$ is defined to be the extended code 
of $\cR_q(\ell, m)^*$, and its parameters are given below \cite{AK98}. 
Let $0 \leq \ell < q(m-1)$. Then the generalized Reed-Muller code $\cR_q(\ell, m)$ has length $n=q^m$, dimension 
$$ 
\kappa=\sum_{i=0}^\ell \sum_{j=0}^{m} (-1)^j \binom{m}{j} \binom{i-jq+m-1}{i-jq},  
$$
and minimum weight 
\begin{eqnarray*}
d = (q-\ell_0)q^{m-\ell_1-1},
\end{eqnarray*}
where $\ell=\ell_1 (q-1)+\ell_0$ and 
$0 \leq \ell_0 < q-1$.  

\begin{theorem}\label{thm-DGM70}   \cite{AK98} 
Let $0 \leq \ell < q(m-1)$ and $\ell=\ell_1 (q-1)+\ell_0$, where $0 \leq \ell_0 < q-1$.  
The total number $A_{(q-\ell_0)q^{m-\ell_1-1}}$ of minimum weight codewords in $\cR_q(\ell, m)$ 
is given by 
\begin{eqnarray*}
A_{(q-\ell_0)q^{m-\ell_1-1}} =
(q-1) \frac{q^{\ell_1} (q^{m}-1) (q^{m-1}-1) \cdots (q^{\ell_1+1}-1)}{(q^{m-\ell_1}-1)(q^{m-\ell_1-1}-1) \cdots (q-1)}N_{\ell_0}, 
\end{eqnarray*} 
where 
\begin{eqnarray*}
N_{\ell_0} = \left\{ 
\begin{array}{ll}
1   & \mbox{ if } \ell_0=0, \\ 
\binom{q}{\ell_0} \frac{q^{m-\ell_1}-1}{q-1} & \mbox{ if } 0 < \ell_0 < q-1.  
\end{array} 
\right. 
\end{eqnarray*}
\end{theorem}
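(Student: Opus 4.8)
The plan is to pass to the standard polynomial (evaluation) model: I identify $\cR_q(\ell,m)$ with the set of vectors $(f(a))_{a\in\gf(q)^m}$, where $f$ ranges over all reduced polynomials in $\gf(q)[x_1,\dots,x_m]$ of total degree at most $\ell$ (each variable occurring to degree $<q$). Since the minimum distance $d=(q-\ell_0)q^{m-\ell_1-1}$ of $\cR_q(\ell,m)$ has already been recorded, the task reduces to counting the codewords of weight exactly $d$, equivalently the polynomials $f$ of degree $\le\ell$ whose number of zeros in $\gf(q)^m$ attains the maximum $q^m-d$ admissible for this degree.

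The key structural step is the classical characterization (Delsarte--Goethals--MacWilliams): writing $\ell=\ell_1(q-1)+\ell_0$, every weight-$d$ codeword is, up to a nonzero scalar, the evaluation of
$$f(x)=\Big(\prod_{i=1}^{\ell_1}\big(1-L_i(x)^{q-1}\big)\Big)\prod_{j=1}^{\ell_0}\big(\beta_j-L_0(x)\big),$$
where $L_1,\dots,L_{\ell_1},L_0$ are affine forms whose linear parts are linearly independent and $\beta_1,\dots,\beta_{\ell_0}$ are distinct elements of $\gf(q)$. Geometrically, the first product is the indicator of the $(m-\ell_1)$-dimensional flat $F=\{L_1=\cdots=L_{\ell_1}=0\}$ in $\AG(m,q)$, and the second factor removes from $F$ the $\ell_0$ parallel $(m-\ell_1-1)$-flats where $L_0\in\{\beta_1,\dots,\beta_{\ell_0}\}$, leaving support size $(q-\ell_0)q^{m-\ell_1-1}=d$. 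I would prove this characterization by induction on $m$, exploiting the affine invariance of $\cR_q(\ell,m)$ (the affine group acts transitively on flats) to normalize a putative minimum weight codeword, restricting it to a pencil of parallel hyperplanes, and using that the induced codewords on each hyperplane again lie in a generalized Reed-Muller code of known minimum weight. Establishing this characterization — that there are no other weight-$d$ codewords, and that the flat $F$ equals the affine hull of the support and is therefore recoverable from the codeword (for $\ell_0<q-1$ the support spans at least two parallel hyperplanes) — is the main obstacle; everything after it is bookkeeping.

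With the characterization in hand I assemble the count flat by flat. The number of $(m-\ell_1)$-dimensional flats in $\AG(m,q)$ is
$$q^{\ell_1}\binom{m}{\ell_1}_q=q^{\ell_1}\,\frac{(q^m-1)(q^{m-1}-1)\cdots(q^{\ell_1+1}-1)}{(q^{m-\ell_1}-1)(q^{m-\ell_1-1}-1)\cdots(q-1)},$$
which already yields the Gaussian-binomial fraction in the statement. When $\ell_0=0$ the second factor is absent, so each weight-$d$ codeword is a nonzero scalar multiple of the indicator of a flat $F$; multiplying the flat count by the $q-1$ scalars gives the formula with $N_{\ell_0}=1$. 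When $0<\ell_0<q-1$, I fix a flat $F\cong\AG(m-\ell_1,q)$ and parametrize the weight-$d$ codewords supported on $F$ by the data $(\tilde\alpha,\phi,b,B)$: a leading scalar $\tilde\alpha\in\gf(q)^*$, a surjective affine form $L_0(x)=\phi(x)+b$ on $F$ (so $\phi$ is one of the $q^{m-\ell_1}-1$ nonzero linear functionals and $b\in\gf(q)$), and a root set $B=\{\beta_1,\dots,\beta_{\ell_0}\}$, one of the $\binom{q}{\ell_0}$ subsets of size $\ell_0$.

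These tuples overcount each codeword by the action of the one-variable affine group $\AGL(1,q)$ on the value $y=L_0(x)$, namely $y\mapsto cy+e$ with $(c,e)\in\gf(q)^*\times\gf(q)$; this group has order $q(q-1)$, it leaves the evaluated codeword unchanged after the compensating adjustment of $\tilde\alpha$ and $B$, and it acts freely on the tuples (a stabilizing element forces $c\phi=\phi$, hence $c=1$, and then $e=0$ since $\phi\neq0$). The decomposition $f|_F=R\circ L_0$ with $\deg R=\ell_0<q-1$ is moreover unique up to this reparametrization: the direction $\ker\phi$ is recovered intrinsically as the set of line-directions along which $f|_F$ is constant on every line, since along any transverse line $L_0$ is surjective and the degree-$\ell_0$ map $R$ is non-constant for $1\le\ell_0\le q-2$. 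Consequently the number of weight-$d$ codewords supported on a fixed flat is
$$\frac{(q-1)\,(q^{m-\ell_1}-1)\,q\,\binom{q}{\ell_0}}{q(q-1)}=(q^{m-\ell_1}-1)\binom{q}{\ell_0}.$$
Multiplying by the number of flats and matching against $(q-1)\,q^{\ell_1}\binom{m}{\ell_1}_q\,N_{\ell_0}$ forces $N_{\ell_0}=\binom{q}{\ell_0}(q^{m-\ell_1}-1)/(q-1)$, which is exactly the stated value, completing the plan.
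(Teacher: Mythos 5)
The paper does not actually prove this statement: Theorem \ref{thm-DGM70} is quoted verbatim from Assmus--Key \cite{AK98} (it is the classical Delsarte--Goethals--MacWilliams enumeration of minimum-weight codewords of generalized Reed--Muller codes), so there is no internal argument to compare yours against. Taken on its own terms, your counting is correct and is essentially the standard derivation of the formula from the DGM characterization: the identification of the Gaussian-binomial fraction with the number of $(m-\ell_1)$-flats of $\AG(m,q)$, the recovery of the flat as the affine hull of the support (valid precisely because $\ell_0<q-1$ leaves at least two of the $q$ parallel subflats inside the support), the $q(q-1)$-fold overcount coming from the $\AGL(1,q)$-reparametrization of $L_0$ together with your freeness and uniqueness arguments, and the resulting per-flat count $(q^{m-\ell_1}-1)\binom{q}{\ell_0}$ all check out and reproduce $N_{\ell_0}$ exactly, including the degenerate case $\ell_0=0$.

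The one genuine gap is the one you flag yourself: the characterization of the minimum-weight codewords as scalar multiples of $\prod_{i}\bigl(1-L_i(x)^{q-1}\bigr)\prod_j\bigl(\beta_j-L_0(x)\bigr)$ is the entire substance of the theorem, and ``induction on $m$ using affine invariance and restriction to parallel hyperplanes'' is a pointer to the DGM argument rather than a proof of it. Carrying that induction out requires real work: one must control how the weight of a putative minimum-weight codeword distributes over the $q$ hyperplanes of a parallel pencil, invoke the minimum weights of $\cR_q(\ell,m-1)$ and $\cR_q(\ell-(q-1),m-1)$ on the slices, and exclude configurations not of the stated product form. None of that is present. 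As written, your argument establishes ``the characterization implies the count,'' which is correct and matches the stated formula, but the proof is incomplete at its central step; for the purposes of this paper that is no worse than what the authors do, since they too simply cite \cite{AK98}.
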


The generalized Reed-Muller codes $\cR_q(\ell, m)$ can also be defined with a multivariate polynomial approach. 
The reader is referred to \cite[Section 5.4]{AK98} for details. For $\ell < (q-1)m$, it was shown in \cite{AK98} that  
$$ 
\cR_q(\ell, m)^\perp = \cR_q(m(q-1)-1-\ell, m). 
$$

The general affine group $\GA_1(\gf(q))$ is defined by 
$$ 
\GA_1(\gf(q))=\{ax+b: a \in \gf(q)^*, \ b \in \gf(q) \}, 
$$ 
which acts on $\gf(q)$ doubly transitively \cite[Section 1.7]{Dingbk18}.  A linear code $\C$ 
of length $q$ is said to be affine-invariant if $\GA_1(\gf(q))$ fixes $\C$ \cite{Charpin90}. For affine-invariant 
codes we use the elements of $\gf(q)$ to index the coordinates of their codewords. 

Let $\ell$ be a positive integer with $1 \leq \ell <(q-1)m$. Then $\cR_q(\ell, m)$ 
is affine-invariant, and the automorphism group $\Aut(\cR_q(\ell, m))$ is doubly transitive. These are well 
known facts about the generalized Reed-Muller codes $\cR_q(\ell, m)$ \cite{AK98,Dingbk18}.

\section{Codes of designs held in a class of affine-invariant ternary codes} \label{sec-ternarydesigncode}

Let $p$ be an odd prime and $m \geq 2$ be an integer. Define 
\begin{eqnarray}\label{eqn-feb241}
\C(m,p)=\{(\tr_{p^m/p}(ax^2+bx)+h)_{x \in \gf(p^m)}: a, \ b \in \gf(p^m), \ h \in \gf(p)\}. 
\end{eqnarray} 
Clearly, the code $\C(m,p)$ is affine-invariant, and holds $2$-designs for each fixed nonzero weight 
(see \cite[Section 6.2]{Dingbk18} and \cite{DWF19}).  
Let $d$ denote the minimum weight of $\C(m,p)$. 
Let $\bD_d(\C(m,p))$ denote the design formed by the supports of 
the minimum weight codewords in $\C(m,p)$, and let $\C_p(\bD_d(\C(m,p)))$ 
denote the linear code over $\gf(p)$ spanned by the incidence matrix of 
the design $\bD_d(\C(m,p))$. 
An interesting problem is to determine the parameters of the code  
$\C_p(\bD_d(\C(m,p)))$. 
This problem is hard to solve for general odd $p$, but is feasible 
in the case $p=3$.  

Our objective of this paper is 
to compute the dimension of the code  $\C_3(\bD_d(\C(m,3)))$,
or equivalently, to determine the 3-rank of the incidence matrix of 
$\bD_d(\C(m,3))$,
 and to prove a lower bound on the minimum distance of 
the code   $\C_3(\bD_d(\C(m,3)))$. 
We will also prove that $\C_3(\bD_d(\C(m,3)))$ is a subcode of the 
fourth-order generalized Reed-Muller ternary code. 
 
In the rest of this section below, we fix $q=3^m$ and let $\tr(x)$ denote the trace function from $\gf(q)$ to $\gf(3)$. 
The code $\C(m,3)$ has four nonzero weights when $m$ is odd, and six nonzero weights 
when $m$ is even \cite{DWF19}. 
When $m$ is odd, the code $\C(m,3)$ has parameters $[3^m, 2m+1, 2 \times 3^{m-1}-3^{(m-1)/2}]$, 
and the weight distribution of the code $\C(m,3)$ is given in Table \ref{tab-CG328c}  \cite{Dingbk18}. The dual 
code has parameters $[3^m, 3^m-1-2m, 5]$. Hence, the Assmus-Mattson theorem can also be employed to prove 
that the codewords of a fixed weight in $\C(m,3)$ support a $2$-design \cite{DingLi16}. When $m \geq 3$ is odd, the minimum distance $d=2\times 3^{m-1}-3^{(m-1)/2}$, and the 
design  $\bD_d(\C(m,3))$ has parameters $2$-$(3^m, d, d(d-1)/2)$ \cite{DingLi16}. 
We treat only the case that $m \geq 3$ is odd. At the end of this section, we will state the conclusions for even 
$m$, but will skip their proofs.  

\begin{table}[ht]\label{tab:wd}
\begin{center} 
\caption{Weight distribution of some ternary linear codes}\label{tab-CG328c}
\begin{tabular}{|l|l|}
\hline
Weight $w$    & No. of codewords $A_w$  \\ \hline
$0$                                                        & $1$  \\ 
$2\times 3^{m-1}-3^{(m-1)/2}$           & $3^{2m}-3^m$  \\ 
$2\times 3^{m-1}$                            & $(3^m+3)(3^m-1)$ \\ 
$2\times 3^{m-1}+3^{(m-1)/2}$          & $3^{2m}-3^m$ \\ 
$3^m$                                  &  $2$ \\ \hline 
\end{tabular} 
\end{center} 
\end{table}

We first prove the following result. 

\begin{lemma} 
Let $m \geq 2$. Then $\bone \in \C(m,3)^\perp$. 
\end{lemma}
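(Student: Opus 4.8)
The plan is to verify the defining orthogonality condition directly. By definition $\bone \in \C(m,3)^\perp$ holds precisely when $\bone$ is orthogonal to every codeword, i.e.\ when the coordinate-sum $\sum_{x \in \gf(q)} c_x$ vanishes in $\gf(3)$ for each $\bc = (c_x)_{x \in \gf(q)} \in \C(m,3)$, where $q = 3^m$. So the whole task reduces to a power-sum computation over $\gf(q)$.

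First I would fix an arbitrary codeword $\bc = (\tr(ax^2+bx)+h)_{x \in \gf(q)}$ with $a,b \in \gf(q)$ and $h \in \gf(3)$, and expand
\begin{equation*}
\langle \bone, \bc \rangle = \sum_{x \in \gf(q)} \bigl(\tr(ax^2+bx)+h\bigr) = \sum_{x \in \gf(q)} \tr(ax^2+bx) + q\,h.
\end{equation*}
Since $q = 3^m$ with $m \geq 1$, the term $q\,h = 3^m h$ is $0$ in $\gf(3)$ and may be discarded. Using additivity of the trace to move the summation inside, followed by distributivity in $\gf(q)$, I would rewrite
\begin{equation*}
\sum_{x \in \gf(q)} \tr(ax^2+bx) = \tr\!\left(a\sum_{x \in \gf(q)} x^2 + b\sum_{x \in \gf(q)} x\right),
\end{equation*}
so that everything comes down to evaluating the two power sums $\sum_x x$ and $\sum_x x^2$.

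The second step is to invoke the standard finite-field power-sum identity: for an integer $k>0$ one has $\sum_{x \in \gf(q)} x^k = -1$ when $(q-1)\mid k$ and $\sum_{x \in \gf(q)} x^k = 0$ otherwise. Here $\sum_x x = 0$ because $q-1>1$, and $\sum_x x^2 = 0$ because $(q-1)\nmid 2$. This is exactly where the hypothesis $m\geq 2$ is used: it guarantees $q-1 = 3^m-1 \geq 8 > 2$, so $q-1$ cannot divide $2$ and the quadratic power sum vanishes. Hence the argument of the trace is $a\cdot 0 + b\cdot 0 = 0$, giving $\langle \bone,\bc\rangle = \tr(0) = 0$; as $\bc$ was arbitrary, this establishes $\bone \in \C(m,3)^\perp$.

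The computation is short, and the only genuinely delicate point is the role of $m\geq 2$. Were $m=1$, we would have $q-1 = 2 \mid 2$, so $\sum_x x^2 = -1$ and the quadratic part would contribute $-a \neq 0$ for $a\neq 0$, breaking orthogonality. Thus the one thing to get right is the precise divisibility condition in the power-sum formula, which is what pins down why the claim needs $m\geq 2$ rather than merely $m\geq 1$; the rest is routine additivity of the trace.
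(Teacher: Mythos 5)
Your proof is correct and follows essentially the same route as the paper, which likewise reduces the claim to the vanishing of $\sum_{x \in \gf(q)} x$ and $\sum_{x \in \gf(q)} x^2$. You merely supply the details the paper leaves implicit, namely the power-sum divisibility criterion and the observation that $m \geq 2$ ensures $(q-1) \nmid 2$.
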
 

\begin{proof}
The desired conclusion follows from the definition of $\C(m,3)$ and 
$$ 
\sum_{x \in \gf(q)} x^2=0 \mbox{ and } \sum_{x \in \gf(q)} x = 0. 
$$
\end{proof} 

We will nee the next auxiliary result. 

\begin{lemma}\label{lem-July0a} 
Let $m \geq 4$. For each $(a, h) \in \gf(q)^* \times \gf(3)$ define 
$$ 
H(a, h)=\{x \in \gf(q): \tr(ax^2)+h=0\}. 
$$ 
For each nopnemptry set $S \subset \gf(q)$, define $\Delta(S)=\{s-s': s, \ s \in S\}$. 
Then $H(a, h)=-H(a, h)$ and $\Delta(H(a, h)) = \gf(q)$. 
\end{lemma}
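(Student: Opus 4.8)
The statement splits into an easy symmetry and a genuine counting fact, and the plan is to dispatch them in that order. For $H(a,h) = -H(a,h)$, I would simply observe that $(-x)^2 = x^2$, so $\tr(a(-x)^2)+h = \tr(ax^2)+h$; hence $x \in H(a,h)$ if and only if $-x \in H(a,h)$. For the harder claim $\Delta(H(a,h)) = \gf(q)$, write $H = H(a,h)$. The inclusion $\Delta(H) \subseteq \gf(q)$ is automatic, and $0 = s - s \in \Delta(H)$ since $H \neq \emptyset$, so it suffices to realize every $t \in \gf(q)^*$ as a difference. I would do this by showing that the number $N(t)$ of pairs $(s,s') \in H \times H$ with $s - s' = t$ is strictly positive. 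Putting $s = y+t$, $s' = y$ and subtracting the two defining trace conditions (using $(y+t)^2 - y^2 = 2ty + t^2$) reduces this to
$$
N(t) = \bigl|\{\, y \in \gf(q) : \tr(ay^2) = -h \ \text{ and } \ \tr(2aty) = -\tr(at^2) \,\}\bigr|,
$$
the count of points of a quadric lying on a fixed affine hyperplane.

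The main step is to estimate $N(t)$ by additive characters. With $\psi(z) = \zeta_3^{\tr(z)}$ the canonical additive character of $\gf(q)$, I would expand both constraints through the orthogonality relation $\frac{1}{3}\sum_{i \in \gf(3)} \zeta_3^{\,ic}$ to obtain
$$
N(t) = \frac{1}{9} \sum_{i,j \in \gf(3)} \zeta_3^{\,ih + j\tr(at^2)} \sum_{y \in \gf(q)} \psi\bigl(iay^2 + 2jaty\bigr).
$$
The term $(i,j) = (0,0)$ contributes the main term $q/9 = 3^{m-2}$; the terms with $i = 0$, $j \neq 0$ vanish because $y \mapsto \psi(2jaty)$ is a nontrivial linear character (as $2jat \neq 0$). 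For each of the six terms with $i \neq 0$, completing the square turns the inner sum into $\psi\bigl(-(j^2/i)at^2\bigr)\,\eta(ia)\,g$, where $\eta$ is the quadratic character and $g = \sum_{u} \psi(u^2)$ is the quadratic Gauss sum of modulus $|g| = 3^{m/2}$. Each such term therefore has modulus $\tfrac{1}{9}\,3^{m/2}$, yielding
$$
N(t) \ge 3^{m-2} - \frac{2}{3}\,3^{m/2}.
$$
This is positive exactly when $3^{m/2} > 6$, i.e. for $m \ge 4$; at $m = 4$ it already gives $N(t) \ge 9 - 6 = 3 > 0$. Hence $N(t) \ge 1$ for every $t$, and $\Delta(H) = \gf(q)$.

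The hard part will be the exponential-sum evaluation: correctly completing the square in the mixed quadratic-linear sum to reduce it to a pure Gauss sum, and verifying that only the $(i,j) = (0,0)$ term survives as a main term while the $i = 0$, $j \neq 0$ terms cancel. It is worth noting that the error estimate is essentially sharp at $m = 4$, which is precisely why the hypothesis $m \ge 4$ is needed; for smaller $m$ the crude bound $\frac{2}{3}\,3^{m/2}$ would overwhelm the main term $3^{m-2}$, so the cutoff in the lemma is not an artifact of the method but reflects where the Gauss-sum error genuinely dominates.
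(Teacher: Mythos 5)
Your proposal is correct and follows essentially the same route as the paper: both count the solutions $N$ of the pair of trace conditions by expanding with additive characters, extract the main term $3^{m-2}$, and bound the six nontrivial quadratic character sums by $3^{m/2}$ each (you via explicit completion of the square and Gauss sums, the paper via the Weil bound quoted from Lidl--Niederreiter), arriving at the identical estimate $N \ge 3^{m-2} - 2\cdot 3^{(m-2)/2} > 0$ for $m \ge 4$. The only cosmetic difference is that you replace the second defining equation by its difference with the first before expanding, which changes nothing in the count of error terms.
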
  

\begin{proof}
For any $b \in \gf(q)$, it suffices to show there is a pair $(x, y) \in H(a, h) \times H(a,h)$ such that $y-x=b$. 
This conclusion is obvious for $b =0$, as $H(a,h)$ is not empty. Hence, we need to prove the conclusion 
for all $b \neq 0$. 
This is to prove that the system of equations 
\begin{eqnarray}\label{eqn-tding10}
\tr(ax^2)+h=0 \mbox{ and } \tr(a(x+b)^2)+h=0 
\end{eqnarray}
has at least one solution $x \in \gf(q)$ for each nonzero $b$. 

Let $N(a,b,h)$ denote the number of solutions $x \in \gf(q)$ 
of Equation (\ref{eqn-tding10}) for $b \neq 0$. Let $\chi'_1$ and $\chi_1$ denote the canonical characteristic of $\gf(3)$ and $\gf(q)$, respectively. Recall that $a \ne 0$ and $b \ne 0$. We have then 
\begin{eqnarray*}
\lefteqn{3^2N(a,b,h)} & &  \\
    & = &  \sum_{x\in\gf(q)} \sum_{y_1,y_2\in\gf(3)}
               \chi_1'{\{y_1[\tr(ax^2)+h]+y_2[\tr(a(x+b)^2)+h)]\}}\\
        & = &  \sum_{y_1,y_2\in\gf(3)} \sum_{x\in\gf(q)}
               \chi_1'(y_1h+y_2h)\chi_1[(y_1+y_2)ax^2+2aby_2 x+y_2ab^2] \\
        & = & \sum_{x\in\gf(q)} 1   
              + \sum_{y_2\in\gf(3)^*} 
               \sum_{x\in\gf(q)} \chi_1'(0) \chi_1(2aby_2 x+y_2ab^2) \\ 
         & & +  \sum_{y_1\neq -y_2} \sum_{x\in\gf(q)}
               \chi_1'(y_1h+y_2h)\chi_1[(y_1+y_2)ax^2+2aby_2 x+y_2ab^2]  \\
        & = & 3^m+ 
 \sum_{y_1\neq -y_2} \sum_{x\in\gf(q)}
               \chi_1'(y_1h+y_2h)\chi_1[(y_1+y_2)ax^2+2aby_2 x+y_2ab^2]  \\
        & = & 3^m+ 
 \sum_{y_1\neq -y_2}   \chi_1'(y_1h+y_2h)   \sum_{x\in\gf(q)}
               \chi_1[(y_1+y_2)ax^2+2aby_2 x+y_2ab^2]                 
\end{eqnarray*}
It then follows from the Weil bound on exponential sums in \cite[p. 218]{Lidl97} that 
\begin{eqnarray*}
\left| 3^2N(a,b,h) -3^m  \right|
&=&  \left|   \sum_{y_1\neq -y_2}   \chi_1'(y_1h+y_2h)   \sum_{x\in\gf(q)}
               \chi_1[(y_1+y_2)ax^2+2aby_2 x+y_2ab^2]        \right| \\
 & \leq &   \sum_{y_1\neq -y_2}   |\chi_1'(y_1h+y_2h)|  \left|   \sum_{x\in\gf(q)}
               \chi_1[(y_1+y_2)ax^2+2aby_2 x+y_2ab^2]        \right| \\          
  & = &  \sum_{y_1\neq -y_2}    \left|   \sum_{x\in\gf(q)}
               \chi_1[(y_1+y_2)ax^2+2aby_2 x+y_2ab^2]        \right| \\      
  & \leq & 6 \times 3^{m/2}.                            
\end{eqnarray*} 
Consequently, 
$$ 
N(a,b,h) \geq 3^{m-2}-2 \times 3^{(m-2)/2} > 1
$$ 
for $m \geq 4$. This completes the proof. 
\end{proof}

\begin{lemma}\label{lem-July0b} 
Let $m \geq 3$. Define by $Q(q)$ and $N(q)$ the set of nonzero squares and the set of nonsquares in 
$\gf(q)$. Then $\Delta(Q(q)) =  \gf(q)$ and  $\Delta(N(q)) =  \gf(q)$. 
\end{lemma}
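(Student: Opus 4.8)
The plan is to reduce the statement about $\Delta(Q(q))$ and $\Delta(N(q))$ to the previous Lemma \ref{lem-July0a}, by recognizing that the sets $Q(q)$ and $N(q)$ are almost exactly the sets $H(a,h)$ appearing there, up to the removal of $0$. First I would fix a nonsquare $\gamma \in \gf(q)$ and recall that for $q = 3^m$ the quadratic character $\eta$ satisfies $\eta(x) = x^{(q-1)/2}$, so that $x \in Q(q)$ iff $\tr$-free conditions can be rephrased multiplicatively. The cleaner route, however, is to observe that $Q(q)$ and $N(q)$ are each invariant under negation precisely when $-1$ is a square, i.e. when $q \equiv 1 \pmod 4$; since $q = 3^m$, we have $-1 \in Q(q)$ iff $m$ is even. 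This sign subtlety is worth flagging but does not obstruct the difference-set conclusion, because $\Delta(S)$ only requires differences, not that $S = -S$.

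The main step is to exhibit, for each nonzero $b \in \gf(q)$, two squares $s, s' \in Q(q)$ with $s - s' = b$, and similarly for nonsquares. I would parametrize $Q(q) \setminus\{0\}$ as the image of the squaring map and count solutions to $u^2 - v^2 = b$ with $u^2, v^2$ both nonzero, i.e. solutions of $u^2 - v^2 = b$ subject to $uv \ne 0$. Writing $u^2 - v^2 = (u-v)(u+v)$, I would set $u - v = t$ and $u + v = b/t$ for $t \ne 0$, which gives $u = (t + b/t)/2$ and $v = (b/t - t)/2$; here $2$ is invertible since $p = 3$. As $t$ ranges over $\gf(q)^*$ this produces roughly $q-1$ representations $b = u^2 - v^2$, and one only needs to discard the finitely many $t$ making $u = 0$ or $v = 0$, which happens for at most four values of $t$. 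Hence for $q = 3^m$ with $m \ge 3$ (so $q \ge 27$) there remain strictly positive choices, establishing $b \in \Delta(Q(q))$.

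For $\Delta(N(q)) = \gf(q)$ the same counting applies after a twist: I would fix a nonsquare $\gamma$ and note that $n \in N(q)$ iff $n = \gamma w^2$ for some $w \in \gf(q)^*$. Then a difference $n - n' = \gamma(w^2 - w'^2) = b$ amounts to solving $w^2 - w'^2 = \gamma^{-1} b$, which is exactly the previous problem with $b$ replaced by $\gamma^{-1} b$; the same $t$-parametrization yields the needed representations. Thus the nonsquare case follows formally from the square case by scaling. Alternatively, one could invoke Lemma \ref{lem-July0a} directly by choosing $(a,h)$ so that $H(a,h)$ coincides with $Q(q)$ or $N(q)$ up to the point $0$, and note that removing a single point cannot destroy the property $\Delta = \gf(q)$ for $q$ large.

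The hard part will be the bookkeeping of degenerate cases: ensuring that the representation $b = u^2 - v^2$ can be realized with both $u^2$ and $v^2$ genuinely nonzero (so that they lie in $Q(q)$ rather than being the excluded element $0$), and likewise that the nonsquare representations avoid the zero element. This is entirely a matter of checking that the number of admissible parameters $t$ exceeds the number of forbidden ones, which the bound $q \ge 27$ comfortably guarantees; no deep exponential-sum estimate is actually required for this lemma, in contrast to Lemma \ref{lem-July0a}. I would therefore present the elementary factoring argument as the main proof and mention the reduction to Lemma \ref{lem-July0a} only as a remark.
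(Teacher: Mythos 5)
Your elementary argument is correct, but it takes a genuinely different route from the paper. The paper's proof is a one-line citation: $Q(q)$ and $N(q)$ are known to be $(q,(q-1)/2,(q-3)/4)$ difference sets in $(\gf(q),+)$ when $m$ is odd and $(q,(q-1)/2,(q-5)/4,(q-1)/2)$ almost difference sets when $m$ is even (Paley-type sets, cited from the book \emph{Codes from Difference Sets}), and since the relevant multiplicities are positive for $q\ge 27$, every nonzero element occurs as a difference. Your factorization $u^2-v^2=(u-v)(u+v)$ with the parametrization $u-v=t$, $u+v=b/t$ gives a self-contained proof of the same fact: it produces $q-1$ solution pairs, of which at most four have $u=0$ or $v=0$ (the roots of $t^2=-b$ and $t^2=b$), so for $q\ge 27$ a representation $b=u^2-v^2$ with $u^2,v^2\in Q(q)$ survives; the twist by a fixed nonsquare $\gamma$ handles $N(q)$. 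What your approach buys is independence from the difference-set literature and an explicit count of representations; what the paper's approach buys is brevity. One caveat: your proposed ``alternative'' of deducing the lemma from Lemma \ref{lem-July0a} by choosing $(a,h)$ so that $H(a,h)$ coincides with $Q(q)$ or $N(q)$ does not work --- $H(a,h)=\{x:\tr(ax^2)+h=0\}$ is a level set of a trace form of size roughly $q/3$ (or $2q/3$), never the set of nonzero squares, which has $(q-1)/2$ elements. Since you relegate that to a remark and your main argument stands on its own, the proof is fine; just drop the remark.
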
 

\begin{proof}
It is known that $Q(q)$ and $N(q)$ are $(q, (q-1)/2, (q-3)/4)$ difference sets in $(\gf(q), +)$ for odd  
$m$, and $(q, (q-1)/2, (q-5)/4, (q-1)/2)$ almost difference sets in $(\gf(q), +)$ for all  even $m$ \cite{Dingbk15}.  
The desired conclusions then follow. 
\end{proof}

The proof of the following lemma is easy and omitted.  

\begin{lemma} 
For each $(a, b) \in \gf(q)^* \times \gf(q)$, there is exactly one $h \in \gf(3)$ such that the codeword 
$\bc_{(a,b,h)}:=(\tr(ax^2+bx)+h)_{x \in \gf(q)}$ has minimum Hamming weight $d$.  Hence, the total 
number of minimum weight codewords is $q(q-1)$. 
\end{lemma}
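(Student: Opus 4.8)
The plan is to analyse, for a fixed pair $(a,b) \in \gf(q)^* \times \gf(q)$, how the Hamming weight of $\bc_{(a,b,h)}$ varies as $h$ runs over $\gf(3)$. Write $f(x)=\tr(ax^2+bx)$ and, for $t \in \gf(3)$, set $n_t=|\{x \in \gf(q): f(x)=t\}|$. The coordinate of $\bc_{(a,b,h)}$ at $x$ vanishes exactly when $f(x)=-h$, so $\wt(\bc_{(a,b,h)})=q-n_{-h}$. As $h$ ranges over $\gf(3)$ the value $-h$ ranges over all of $\gf(3)$, hence the three weights obtained are precisely $q-n_0,\,q-n_1,\,q-n_2$. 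Thus the whole statement reduces to understanding the value distribution $(n_0,n_1,n_2)$ of the quadratic function $f$.

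First I would determine this distribution. Since $a \neq 0$ and $2$ is invertible modulo $3$, completing the square gives $f(x)=\tr(au^2)+c$ with $u=x+b/(2a)$ and a constant $c=-\tr(b^2/(4a)) \in \gf(3)$; hence $(n_0,n_1,n_2)$ is merely a permutation of the value distribution of $x \mapsto \tr(ax^2)$. A standard quadratic Gauss sum evaluation over $\gf(3^m)$ (equivalently, the Weil-bound computation that already underlies Table \ref{tab-CG328c}) shows that, for odd $m$, this distribution consists of the three pairwise distinct frequencies $3^{m-1}$ and $3^{m-1}\pm 3^{(m-1)/2}$. Consequently $\{q-n_0,q-n_1,q-n_2\}=\{2\times 3^{m-1},\,2\times 3^{m-1}-3^{(m-1)/2},\,2\times 3^{m-1}+3^{(m-1)/2}\}$, three distinct values, and the minimum weight $d=2\times 3^{m-1}-3^{(m-1)/2}$ is attained for exactly one $h$, namely the unique $h$ with $n_{-h}=3^{m-1}+3^{(m-1)/2}$. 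This proves the first assertion. For completeness one also checks that no minimum weight codeword can come from $a=0$: such a codeword is $(\tr(bx)+h)_{x \in \gf(q)}$, which has weight $0$, $3^m$, or (when $b\neq 0$, since $\tr(bx)$ is balanced) $2\times 3^{m-1}$, none of which equals $d$.

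Finally I would count. The map $(a,b,h)\mapsto \bc_{(a,b,h)}$ is a surjection from $\gf(q)\times\gf(q)\times\gf(3)$ onto $\C(m,3)$ by the definition of the code; since the domain has $3^{2m}\cdot 3=3^{2m+1}$ elements and $\dim \C(m,3)=2m+1$, it is in fact a bijection, so distinct triples give distinct codewords. By the previous paragraph every minimum weight codeword arises from a pair $(a,b)$ with $a \neq 0$ together with its unique admissible $h$; as there are $(q-1)q$ such pairs and the associated codewords are pairwise distinct by injectivity, the number of minimum weight codewords is exactly $q(q-1)$, in agreement with the entry $A_d=3^{2m}-3^m$ of Table \ref{tab-CG328c}. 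The only genuinely computational point is the Gauss sum evaluation of $(n_0,n_1,n_2)$, which is why it is natural to regard the lemma as easy; everything else is bookkeeping built on the completing-the-square reduction and the dimension count.
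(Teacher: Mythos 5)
Your argument is correct and complete; the paper itself gives no proof of this lemma (it is declared ``easy and omitted''), so there is nothing to compare against, but your write-up is exactly the intended kind of verification. The two essential points are both handled properly: completing the square (legitimate since $2$ is invertible in characteristic $3$) reduces the value distribution of $\tr(ax^2+bx)$ to that of $\tr(ax^2)$, whose three frequencies $3^{m-1}$, $3^{m-1}\pm 3^{(m-1)/2}$ are pairwise distinct for odd $m$ — which is where the standing assumption that $m\ge 3$ is odd enters, and which is consistent with Table \ref{tab-CG328c} — and the counting step correctly upgrades the surjection $(a,b,h)\mapsto \bc_{(a,b,h)}$ to a bijection by comparing $3^{2m+1}$ with $\dim\C(m,3)=2m+1$, so that the $q(q-1)$ admissible triples give $q(q-1)$ distinct minimum weight codewords, matching $A_d=3^{2m}-3^m$. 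The only point left implicit is the sign analysis in the quadratic Gauss sum showing the frequencies are exactly $3^{m-1}$ and $3^{m-1}\pm 3^{(m-1)/2}$, but since only their pairwise distinctness and the fact that exactly one of them is maximal are used, this is harmless.
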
 

For each $(a, b) \in \gf(q)^* \times \gf(q)$, let $h(a,b)$ denote the unique element in $\gf(q)$ 
such that the codeword $\bc_{(a,b,h)}:=(\tr(ax^2+bx)+h(a,b))_{x \in \gf(q}$ has the minimum weight. 
We use the elements of $\gf(q)$ to index the coordinates of the code 
$\C_3(\bD_d(\C(m,3)))$. We also use $\gf(q)$ as the point set of the 
design $\bD_d(\C(m,3))$. 

For each $(a, b) \in \gf(q)^* \times \gf(q)$, define a vector 
$$ 
\bg(a, b, h(a,b))=(g(a,b,h(a,b))_x)_{x \in \gf(q)} \in \gf(3)^{q}, 
$$
where 
\begin{eqnarray*}
g(a,b,h(a,b))_x = 
\left\{ 
\begin{array}{ll}
1 & \mbox{ if } \tr(ax^2+bx) +h(a, b) \neq 0, \\
0 & \mbox{ otherwise.} 
\end{array}
\right. 
\end{eqnarray*}
By definition, each vector $\bg(a, b, h(a,b))$ has minimum weight $d$, 
and the code $\C_3(\bD_d(\C(m,3)))$ is the linear subspace spanned by the vectors in 
the following set 
$$ 
\{\bg(a, b, h(a,b)), \ (a, b) \in \gf(q)^* \times \gf(q) \}. 
$$ 

By definition, for each $x \in \gf(q)$ we have 
$$ 
g(a,b,h(a,b))_x = ( \tr(ax^2+bx) +h(a, b) )^2. 
$$ 
This expression will help us analyze the code $\C_3(\bD_d(\C(m,3)))$. 

Recall that $\C(m,3)$ is affine-invariant. Let $m$ be odd and let $h \in \gf(3)$ be the unique element such that the codeword 
$( \tr(x^2) +h)_{x \in \gf(q)}$ has minimum weight in $\C(m,3)$. It is easily seen that the set of all minimum weight 
codewords in $\C(m,3)$ is given by 
$$ 
\{ \pm \left ( \tr((ax+b)^2) +h \right )_{x \in \gf(q)}: (a, b) \in \gf(q)^* \times \gf(q) \}
$$ 
Consequently, we obtain the following lemma. 

\begin{lemma}\label{lem-bas} 
Let $m \geq 3$ be odd. 
 Then the code $\C_3(\bD_d(\C(m,3)))$ is linearly spanned by the vectors in following set: 
 \begin{eqnarray}
 \{ ( \tr((ax+b)^2) +h)^2_{x \in \gf(q)}: (a, b) \in \gf(q)^* \times \gf(q) \} 
 \end{eqnarray}
\end{lemma}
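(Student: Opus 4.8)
The plan is to prove the stronger set-theoretic statement that the generating family appearing in the claim coincides, as a set of vectors in $\gf(3)^{q}$, with the family $\{\bg(a,b,h(a,b)) : (a,b)\in\gf(q)^*\times\gf(q)\}$ that already spans $\C_3(\bD_d(\C(m,3)))$ by construction. Once the two families are shown to be equal as sets, equality of their linear spans is immediate, so the whole lemma reduces to matching up two explicit lists of vectors.

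The bridge between the two lists is the elementary identity over $\gf(3)$ that $t^2=1$ for $t\neq 0$ and $0^2=0$, so that coordinatewise squaring of any codeword $\bc$ returns exactly the characteristic vector of its support, and moreover $(-\bc)^2=\bc^2$. Two facts are already in hand. First, $\C_3(\bD_d(\C(m,3)))$ is spanned by the support indicators $\bg(a,b,h(a,b))=(\tr(ax^2+bx)+h(a,b))^2_{x\in\gf(q)}$ of the minimum weight codewords $\bc_{(a,b,h(a,b))}$, and these exhaust all $q(q-1)$ minimum weight words of $\C(m,3)$. Second, by affine-invariance of $\C(m,3)$ and its closure under scaling by $-1$, the full set of minimum weight codewords is also $\{\pm(\tr((ax+b)^2)+h)_{x\in\gf(q)} : (a,b)\in\gf(q)^*\times\gf(q)\}$. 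Applying coordinatewise squaring to each of these two descriptions yields in both cases precisely the set of support indicators of the minimum weight codewords of $\C(m,3)$: directly in the first case, and in the second after the sign $\pm$ is annihilated by squaring. Hence the two families agree as sets, and the lemma follows.

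The only substantive ingredient is the identification of the complete minimum weight set with $\{\pm(\tr((ax+b)^2)+h)\}$, and this is where I expect the real work to sit. That these words are minimum weight is immediate from affine-invariance and $\pm$-closure; that they are all of them is a counting check, since the map $(a,b)\mapsto(\tr((ax+b)^2)+h)$ is two-to-one, identifying $(a,b)$ with $(-a,-b)$, and because $-1$ is a nonsquare in $\gf(3^m)$ for odd $m$ the $+$ and $-$ branches have distinct quadratic parts and hence disjoint images, giving the total $q(q-1)=3^{2m}-3^m$ matching Table~\ref{tab-CG328c}. Since that identification is already recorded immediately before the statement, the remaining passage through the squaring identity is purely formal, and I anticipate no obstacle beyond carefully tracking that coordinatewise squaring simultaneously reads off the support and erases the sign.
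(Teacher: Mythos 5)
Your argument is correct and follows essentially the same route as the paper: identify the full set of minimum weight codewords of $\C(m,3)$ with $\{\pm(\tr((ax+b)^2)+h)_{x\in\gf(q)}\}$ via affine-invariance, then observe that coordinatewise squaring over $\gf(3)$ produces exactly the support indicators spanning $\C_3(\bD_d(\C(m,3)))$ while erasing the sign. Your counting check (two-to-one parametrization plus disjointness of the $\pm$ branches because $-1$ is a nonsquare for odd $m$) fills in the detail the paper dismisses with ``it is easily seen,'' but the structure of the argument is the same.
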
 

In the following, we identify any vector $(f(x))_{x \in \gf(q)} \in \gf(3)^q$ with the 
function $f(x)$ from $\gf(q)$ to $\gf(3)$. This will simplify our discussions below.  We are now ready 
to determine the dimension of the code $\C_3(\bD_d(\C(m,3)))$ and derive a lower bound on the 
minimum distance of the code. 

Let $a \in \gf(q)^*$. Note that 
\begin{eqnarray}\label{eqn-july1} 
\lefteqn{ \left( \tr((ax+b)^2)+h  \right)^2 =  \tr(a^2x^2)^2 +  \tr(abx)^2 + (\tr(b^2)+h)^2 + }  \nonumber \\ 
&   \tr(a^2x^2)\tr(abx) -(\tr(b^2)+h) \tr(a^2x^2) +  (\tr(b^2)+h) \tr(abx) \in \C_3(\bD_d(\C(m,3))). 
\end{eqnarray} 
Replacing $b$ with $-b$ in (\ref{eqn-july1}), we obtain 
\begin{eqnarray}\label{eqn-july2} 
\lefteqn{ \left( \tr((ax-b)^2)+h  \right)^2 =  \tr(a^2x^2)^2 +  \tr(abx)^2 + (\tr(b^2)+h)^2 - }  \nonumber \\ 
&   \tr(a^2x^2)\tr(abx) -(\tr(b^2)+h) \tr(a^2x^2) -  (\tr(b^2)+h) \tr(abx) \in \C_3(\bD_d(\C(m,3))). 
\end{eqnarray} 
Subtracting  (\ref{eqn-july2}) from (\ref{eqn-july1}) yields 
$$ 
 \tr(a^2x^2)\tr(abx) +  (\tr(b^2)+h) \tr(abx) \in \C_3(\bD_d(\C(m,3))), 
$$ 
which is the same as 
\begin{eqnarray}\label{eqn-july3} 
 \tr(a^2x^2)\tr(cx) +  (\tr((c/a)^2)+h) \tr(cx) \in \C_3(\bD_d(\C(m,3)))  
\end{eqnarray} 
for all $(a, c) \in \gf(q)^* \times \gf(q)$. 

Adding  (\ref{eqn-july2}) and (\ref{eqn-july1}) gives  
\begin{eqnarray}\label{eqn-cont-1} 
 \tr(a^2x^2)^2 +  \tr(abx)^2 + (\tr(b^2)+h)^2 - (\tr(b^2)+h) \tr(a^2x^2)  \in \C_3(\bD_d(\C(m,3))), 
\end{eqnarray} 
which is the same as 
\begin{eqnarray}\label{eqn-july4} 
 \tr(a^2x^2)^2 +  \tr(cx)^2 + (\tr((c/a)^2)+h)^2 - (\tr((c/a)^2)+h) \tr(a^2x^2)  \in \C_3(\bD_d(\C(m,3)))  
\end{eqnarray} 
for all $(a, c) \in \gf(q)^* \times \gf(q)$.

\begin{lemma}\label{lem-aver-Tr}
Let $m \geq 3$ be an odd integer  and $q=3^m$.  Let $b, x\in \mathrm{GF}(q)$. Then the following hold. 
\begin{enumerate}
\item $\sum_{a\in \mathrm{GF}(q)^*} \mathrm{Tr}(a^2x^2)^2=0$.
\item $\sum_{a\in \mathrm{GF}(q)^*} \mathrm{Tr}(abx)^2=0$.
\item $\sum_{a\in \mathrm{GF}(q)^*} \mathrm{Tr}(a^2x^2)=0$.
\end{enumerate}
\end{lemma}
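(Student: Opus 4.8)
The plan is to prove each of the three sums over $a \in \gf(q)^*$ vanishes by reducing $\tr$-expressions to polynomials in $a$ (for fixed $x$, $b$) and invoking the standard power-sum identity $\sum_{a \in \gf(q)^*} a^k = -1$ if $(q-1) \mid k$ and $k > 0$, and $0$ otherwise. Writing $\tr(y) = \sum_{i=0}^{m-1} y^{3^i}$ as a $\gf(3)$-linear combination of the Frobenius powers, each trace becomes a sum of monomials in $a$, and squaring produces sums of monomials whose exponents I must analyze modulo $q-1$.

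For part $(3)$, $\tr(a^2 x^2) = \sum_{i=0}^{m-1} a^{2\cdot 3^i} x^{2 \cdot 3^i}$, so interchanging sums gives $\sum_i x^{2\cdot 3^i}\sum_{a \in \gf(q)^*} a^{2 \cdot 3^i}$. Each exponent $2\cdot 3^i$ satisfies $0 < 2\cdot 3^i < q-1 = 3^m - 1$ for $m \geq 2$ and is not divisible by $q-1$, so each inner sum vanishes; hence the total is $0$. Part $(2)$ is essentially the same computation applied to $\tr(abx)^2$: expanding the square gives $\sum_{i,j} (bx)^{3^i + 3^j} a^{3^i + 3^j}$, so I need $\sum_{a} a^{3^i + 3^j} = 0$, which holds precisely when $(q-1) \nmid (3^i + 3^j)$. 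The exponents $3^i + 3^j$ for $0 \leq i, j \leq m-1$ range over $2 \leq 3^i + 3^j \leq 2\cdot 3^{m-1} < 3^m - 1$, and I expect none to be a nonzero multiple of $q-1$; the diagonal terms give $2\cdot 3^i < q-1$, and the off-diagonal terms give distinct $3$-adic digit patterns strictly below $q-1$. So every inner $a$-sum is $0$.

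Part $(1)$ is the delicate one and I expect it to be the main obstacle: here $\tr(a^2 x^2)^2 = \big(\sum_i (a^2 x^2)^{3^i}\big)^2 = \sum_{i,j}(x^2)^{3^i+3^j} a^{2(3^i+3^j)}$, so the relevant exponents are $2(3^i + 3^j)$. The maximum value is $2(3^{m-1} + 3^{m-1}) = 4 \cdot 3^{m-1}$, which \emph{exceeds} $q - 1 = 3^m - 1$ when $m \geq 1$, so unlike parts $(2)$ and $(3)$ I can no longer argue that every exponent is below $q-1$. The heart of the argument is therefore to check whether any exponent $2(3^i + 3^j)$ is a positive multiple of $q - 1 = 3^m - 1$. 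I would examine this congruence carefully: since $2(3^i+3^j) \leq 4\cdot 3^{m-1} < 2(3^m-1)$, the only possible multiple is $q-1$ itself, so I must rule out $2(3^i + 3^j) = 3^m - 1$. Because the right side is even only when $3^m - 1 \equiv 0 \pmod 2$ (always true), this reduces to $3^i + 3^j = (3^m-1)/2 = (3^m - 1)/2$, and I would show the left side, having at most two nonzero base-$3$ digits each equal to $1$ (or a single digit $2$ on the diagonal), cannot equal $(3^m-1)/2 = 1 + 3 + \cdots + 3^{m-1}$, whose base-$3$ representation is the all-ones string of length $m$; for $m \geq 3$ this string has at least three nonzero digits and hence cannot be matched. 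This digit-counting step is where the hypothesis $m \geq 3$ is essential, and once it is settled every inner $a$-sum again vanishes, giving the result.
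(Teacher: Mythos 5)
Your proof is correct, but it takes a genuinely different route from the paper's. The paper handles parts (1) and (2) by a change of variable ($a \mapsto a/x$, resp.\ $a \mapsto a/(bx)$) that reduces each sum to $\sum_{a}\tr(a^2)^2$, resp.\ $\sum_a \tr(a)^2$, and then reads these off modulo $3$ as Hamming weights: part (2) becomes $|\{a:\tr(a)\neq 0\}| = 2\cdot 3^{m-1}\equiv 0$, and part (1) becomes the weight of the codeword $(\tr(x^2))_{x\in\gf(q)}$, which is divisible by $3$ by the known weight distribution in Table \ref{tab-CG328c}; part (3) uses $\sum_{a\neq 0}a^2=0$ just as you do. Your argument instead expands each trace into monomials $a^k$ and invokes the power-sum identity, so the whole lemma rests on one combinatorial fact: no exponent $2\cdot 3^i$, $3^i+3^j$, or $2(3^i+3^j)$ is a positive multiple of $q-1$. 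You correctly isolate the only dangerous case, $2(3^i+3^j)=3^m-1$, i.e.\ $3^i+3^j=(3^m-1)/2=\sum_{k=0}^{m-1}3^k$, and rule it out for $m\ge 3$ by counting nonzero base-$3$ digits (at most two on the left, exactly $m$ on the right); this also correctly explains why the statement fails at $m=2$. The trade-off: the paper's proof is shorter but imports the weight distribution of $\C(m,3)$ (stated for odd $m$), whereas yours is elementary and self-contained and, notably, works verbatim for even $m\ge 3$ as well --- which is relevant to the paper's later remark that the main results extend to even $m$.
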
 
\begin{proof}
If $x=0$, the conclusions are obvious. Next, let $x\neq 0$.
Then,
\begin{align*}
\sum_{a\in \mathrm{GF}(q)^*} \mathrm{Tr}(a^2x^2)^2
=& \sum_{a\in \mathrm{GF}(q)} \mathrm{Tr}(a^2x^2)^2\\
=&  \sum_{a\in \mathrm{GF}(q)} \mathrm{Tr}(a^2)^2\\
=& | \{a\in \mathrm{GF}(q): \mathrm{Tr}(a^2)\neq 0\} |  \bmod{3} \\
=& \mathrm{wt}\left (  (\mathrm{Tr}(x^2))_{x\in \mathrm{GF}(q)}\right ) \bmod{3} \\
=& 0,
\end{align*}
where the last equality follows from Table \ref{tab:wd}.

If $b=0$,  $\sum_{a\in \mathrm{GF}(q)^*} \mathrm{Tr}(abx)^2=0$. Next, let $b\neq 0$.
Then
\begin{align*}
\sum_{a\in \mathrm{GF}(q)^*} \mathrm{Tr}(abx)^2=&\sum_{a\in \mathrm{GF}(q)} \mathrm{Tr}(abx)^2\\
=&\sum_{a\in \mathrm{GF}(q)} \mathrm{Tr}(a)^2\\
=&  |\{a\in \mathrm{GF}(q): \mathrm{Tr}(a)\neq 0\}| \bmod{3} \\
=&2\cdot 3^{m-1} \bmod{3} \\
=&0.
\end{align*}
Since $\sum_{a\in \mathrm{GF}(q)^*} a^2=0$, we have 
\begin{align*}
\sum_{a\in \mathrm{GF}(q)^*} \mathrm{Tr}(a^2x^2)=\sum_{a\in \mathrm{GF}(q)^*} \mathrm{Tr}(a^2) 
=  \mathrm{Tr}(\sum_{a\in \mathrm{GF}(q)^*}a^2) 
= 0.
\end{align*}
This completes the proof.

\end{proof}

\begin{lemma}\label{lem-july197} 
Let $m \geq 4$. We have $1  \in \C_3(\bD_d(\C(m,3)))$. 
\end{lemma}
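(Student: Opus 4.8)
The plan is to exhibit the all-one vector $\bone$ (the constant function $1$) as a $\gf(3)$-linear combination of known code elements by a \emph{two-stage averaging} of the membership relation (\ref{eqn-july4}) over its parameters, annihilating every non-constant contribution with Lemma \ref{lem-aver-Tr}. A single averaging cannot work: since $m\ge 3$ is odd, both $q\equiv 0\pmod 3$ and the minimum weight $d=2\cdot 3^{m-1}-3^{(m-1)/2}\equiv 0\pmod 3$, so summing (\ref{eqn-july4}) over $a$ alone or over $c$ alone collapses to $0$. The idea is to average over $a$ first, deliberately retaining one non-constant term, and then average over $c$ to kill it.

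First I would fix $c\in\gf(q)^*$ and sum (\ref{eqn-july4}) over $a\in\gf(q)^*$. The four summands behave as follows: $\sum_{a}\tr(a^2x^2)^2=0$ by part (1) of Lemma \ref{lem-aver-Tr}; $\sum_{a}\tr(cx)^2=(q-1)\tr(cx)^2=2\tr(cx)^2$; and, since $c/a$ runs over $\gf(q)^*$ together with $a$, $\sum_{a}(\tr((c/a)^2)+h)^2$ equals the constant $K_1:=\sum_{u\in\gf(q)^*}(\tr(u^2)+h)^2$. Here one must pin down $h$: its defining property that $(\tr(x^2)+h)_{x}$ has minimum weight $d$, read against Table \ref{tab-CG328c}, forces $|\{x:\tr(x^2)=-h\}|=q-d=3^{m-1}+3^{(m-1)/2}\ne 3^{m-1}$, so $-h\ne 0$ and thus $h^2=1$. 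Since $\sum_{u\in\gf(q)}(\tr(u^2)+h)^2$ counts the support of $(\tr(x^2)+h)_x$, it equals $d$, whence $K_1=d-h^2\equiv -1\equiv 2\pmod 3$.

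The crux is the remaining mixed term $-\sum_{a}(\tr((c/a)^2)+h)\tr(a^2x^2)$. Its piece $h\sum_a\tr(a^2x^2)$ vanishes by part (3) of Lemma \ref{lem-aver-Tr}, leaving $\sum_{a}\tr(c^2a^{-2})\tr(a^2x^2)$, which I would evaluate via the orthogonality identity
$$\sum_{a\in\gf(q)^*}\tr(\alpha a^{-2})\tr(\beta a^2)=-\tr(\alpha\beta),\qquad \alpha,\beta\in\gf(q).$$
This follows by expanding both traces as $\sum_i\alpha^{3^i}a^{-2\cdot 3^i}$ and $\sum_j\beta^{3^j}a^{2\cdot 3^j}$ and using $\sum_{a\in\gf(q)^*}a^{2(3^j-3^i)}=-1$ when $i=j$ and $0$ otherwise, the off-diagonal exponents $2(3^j-3^i)$ being nonzero of absolute value strictly below $q-1=3^m-1$. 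With $\alpha=c^2,\ \beta=x^2$ this gives $-\tr(c^2x^2)$, so the mixed term contributes $+\tr(c^2x^2)$. Collecting the four contributions yields, for every $c\in\gf(q)^*$, the intermediate membership
$$2\tr(cx)^2+\tr(c^2x^2)+2\cdot\bone\ \in\ \C_3(\bD_d(\C(m,3))).$$

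Finally I would sum this relation over $c\in\gf(q)^*$. By part (2) of Lemma \ref{lem-aver-Tr} (taking $b=1$) we have $\sum_{c}\tr(cx)^2=0$, and $\sum_{c}\tr(c^2x^2)=\tr\!\big(x^2\sum_{c\ne0}c^2\big)=0$ since $\sum_{c\in\gf(q)^*}c^2=0$; the constant accumulates to $2(q-1)\bone\equiv 2\cdot 2\,\bone=\bone$. Hence $\bone\in\C_3(\bD_d(\C(m,3)))$. I expect the main obstacle to be precisely the mixed-term evaluation, namely establishing the orthogonality identity together with the vanishing of its off-diagonal exponents; the rest is bookkeeping through Lemma \ref{lem-aver-Tr} and the arithmetic $d\equiv q\equiv 0\pmod 3$. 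I note that this route uses only $m\ge 3$ odd (via Table \ref{tab-CG328c} and Lemma \ref{lem-aver-Tr}) and is self-contained from (\ref{eqn-july4}), so it covers the stated range $m\ge 4$ in particular.
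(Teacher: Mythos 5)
Your proof is correct, and every step checks out: the orthogonality identity $\sum_{a\in\gf(q)^*}\tr(\alpha a^{-2})\tr(\beta a^{2})=-\tr(\alpha\beta)$ is valid (the off-diagonal exponents $2(3^j-3^i)$ are indeed nonzero and of absolute value less than $3^m-1$), the determination $h\neq 0$ from the weight count is sound, and the two averagings do leave exactly $\bone$. But you chose the harder of the two available starting points. The paper averages the \emph{unreparametrized} identity (\ref{eqn-cont-1}), in which the parameter $b$ is independent of $a$: summing over $a\in\gf(q)^*$ with $b$ fixed, the three non-constant terms vanish instantly by Lemma \ref{lem-aver-Tr} (the mixed term factors as $(\tr(b^2)+h)\sum_a\tr(a^2x^2)=0$), leaving $2(\tr(b^2)+h)^2\bone$, and one merely picks $b$ with $\tr(b^2)+h\neq 0$ --- a three-line argument needing no orthogonality identity, no value of $h$, and no second averaging. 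By instead starting from (\ref{eqn-july4}), where $b=c/a$ is entangled with the summation variable, you are forced to evaluate $\sum_a(\tr((c/a)^2)+h)^2$ and $\sum_a\tr(c^2a^{-2})\tr(a^2x^2)$ explicitly and then kill the residual $2\tr(cx)^2+\tr(c^2x^2)$ with a second pass over $c$. What your route buys is the pleasant intermediate relation $2\tr(cx)^2+\tr(c^2x^2)+2\bone\in\C_3(\bD_d(\C(m,3)))$ and the explicit fact $h^2=1$; what it costs is the entire character-sum apparatus. Note also that even within your own framework you could have finished in one step: taking $c=0$ in your first averaging gives $(q-1)h^2\bone=2\bone\in\C_3(\bD_d(\C(m,3)))$ directly (the $\tr(cx)^2$ and mixed terms vanish), since you have already shown $h\neq 0$.
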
 

\begin{proof}
By (\ref{eqn-cont-1}), we have
\begin{align*}
\sum_{a\in \gf(q)^*} \left (\tr(a^2x^2)^2 +  \tr(abx)^2 + (\tr(b^2)+h)^2 - (\tr(b^2)+h) \tr(a^2x^2) \right ) \in \C_3(\bD_d(\C(m,3))).
\end{align*}
 It then follows from  Lemma \ref{lem-aver-Tr}
that $(\tr(b^2)+h)^2   \in \C_3(\bD_d(\C(m,3)))$ for all $b \in \gf(q)$. One can find  $b \in  \gf(q)$ such that $\tr(b^2)+h \neq 0$. The desired 
conclusion then follows.  
\end{proof}

\begin{lemma}\label{lem-july191} 
Let $m \geq 5$ be odd. 
For all $(a, c) \in \gf(q) \times \gf(q)$, $\tr(ax^2)\tr(cx) \in \C_3(\bD_d(\C(m,3)))$. 
\end{lemma}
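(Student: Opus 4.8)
The plan is to extract the target vectors from the membership (\ref{eqn-july3}), namely that $\tr(a^2x^2)\tr(cx)+(\tr((c/a)^2)+h)\tr(cx)$ lies in $\C_3(\bD_d(\C(m,3)))$ for all $(a,c)\in\gf(q)^*\times\gf(q)$, in two stages: first strip off the linear summand $\tr(cx)$ so that $\tr(a^2x^2)\tr(cx)$ survives alone, then enlarge the set of admissible leading coefficients from nonzero squares $a^2$ to all of $\gf(q)$. The cases $a=0$ or $c=0$ are immediate, since there $\tr(ax^2)\tr(cx)$ is the zero vector, so I may assume $a,c\neq 0$.

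First I would establish that $\tr(cx)\in\C_3(\bD_d(\C(m,3)))$ for every $c$. Fix $c\neq 0$ and sum (\ref{eqn-july3}) over $a\in\gf(q)^*$; since the code is $\gf(3)$-linear the sum again lies in it, and the common factor $\tr(cx)$ pulls out of both terms. By Lemma \ref{lem-aver-Tr}(3) the coefficient $\sum_{a\in\gf(q)^*}\tr(a^2x^2)$ coming from the first term vanishes. For the second term, as $a$ ranges over $\gf(q)^*$ so does $c/a$, whence $\sum_{a}\tr((c/a)^2)=\tr\bigl(\sum_{u\in\gf(q)^*}u^2\bigr)=0$ because the power sum $\sum_{u\in\gf(q)}u^2$ vanishes for $q>3$, while $\sum_{a\in\gf(q)^*}h=(q-1)h=-h$ in $\gf(3)$. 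Hence $-h\,\tr(cx)\in\C_3(\bD_d(\C(m,3)))$.

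The decisive point is that $h\neq 0$, which then upgrades the last line to $\tr(cx)\in\C_3(\bD_d(\C(m,3)))$. To see this I would rule out $h=0$ by a weight count: the weight of $(\tr(x^2))_{x\in\gf(q)}$ equals $q-N_0$ with $N_0=|\{x:\tr(x^2)=0\}|$, and writing $N_0=\tfrac13\bigl(q+\sum_{y\in\gf(3)^*}\sum_{x\in\gf(q)}\chi_1(yx^2)\bigr)$ together with the Gauss-sum identity $\sum_x\chi_1(yx^2)=\eta(y)\sum_x\chi_1(x^2)$ (here $\eta$ is the quadratic character of $\gf(q)$) gives $N_0=q/3=3^{m-1}$, since $\eta(1)+\eta(-1)=0$ as $-1$ is a nonsquare in $\gf(3^m)$ for odd $m$. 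Thus $(\tr(x^2))_x$ has weight $2\cdot 3^{m-1}>d$, so it is not a minimum-weight codeword and $h\neq 0$. Feeding $\tr(cx)\in\C_3(\bD_d(\C(m,3)))$ back into (\ref{eqn-july3}) and subtracting the scalar multiple $(\tr((c/a)^2)+h)\tr(cx)$ then yields $\tr(a^2x^2)\tr(cx)\in\C_3(\bD_d(\C(m,3)))$ for every $a\in\gf(q)^*$, i.e. $\tr(sx^2)\tr(cx)$ lies in the code for every nonzero square $s$.

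Finally I would remove the square restriction. For fixed $c$ the set $\{s\in\gf(q):\tr(sx^2)\tr(cx)\in\C_3(\bD_d(\C(m,3)))\}$ is a $\gf(3)$-subspace of $\gf(q)$, since the code is linear and $s\mapsto\tr(sx^2)$ is additive. It contains $0$ together with all $(q-1)/2$ nonzero squares, hence more than $q/3=3^{m-1}$ elements; being of $3$-power order it must equal $\gf(q)$. Therefore $\tr(sx^2)\tr(cx)\in\C_3(\bD_d(\C(m,3)))$ for all $s\in\gf(q)$, and relabelling $s$ as $a$ completes the proof. I expect the main obstacle to be the step $h\neq 0$: the averaging in the first stage is purely formal and would collapse to the trivial $0\in\C_3(\bD_d(\C(m,3)))$ were $h$ zero, so one genuinely has to invoke the value distribution of $\tr(x^2)$ (equivalently the sign of the underlying Gauss sum); the spanning argument and the surrounding linear algebra are routine by comparison.
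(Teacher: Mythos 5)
Your proof is correct, but it takes a genuinely different route from the paper's. The paper handles the first stage by choosing $c$ inside the set $H(a^{-2},h)=\{c:\tr((c/a)^2)+h=0\}$, so that the affine term in \eqref{eqn-july3} vanishes outright; Lemma \ref{lem-July0a} (a Weil-bound argument) guarantees that $H(a^{-2},h)$ contains an $\gf(3)$-basis of $\gf(q)$, and additivity in $c$ then yields $\tr(a^2x^2)\tr(cx)\in\C_3(\bD_d(\C(m,3)))$ for all $c$. You instead average \eqref{eqn-july3} over $a\in\gf(q)^*$, which by Lemma \ref{lem-aver-Tr} collapses to $-h\,\tr(cx)$, and then pin down $h\neq 0$ by a Gauss-sum count showing $|\{x:\tr(x^2)=0\}|=3^{m-1}$, so that $(\tr(x^2))_x$ has weight $2\cdot 3^{m-1}\neq d$. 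This gives you $\tr(cx)\in\C_3(\bD_d(\C(m,3)))$, i.e.\ Lemma \ref{lem-july192}, which the paper only proves afterwards as a consequence of the present lemma; your ordering is the reverse of the paper's but not circular, and subtraction then isolates $\tr(a^2x^2)\tr(cx)$. For the second stage (passing from nonzero squares to all coefficients) the paper invokes the difference-set Lemma \ref{lem-July0b} to extract a basis from $Q(q)$, whereas you observe that the set of admissible coefficients is an $\gf(3)$-subspace of size exceeding $3^{m-1}$ and hence all of $\gf(q)$; this counting argument is more elementary and equally valid. The net effect is that your proof bypasses the Weil-bound Lemma \ref{lem-July0a} entirely for this step, at the cost of one explicit character-sum evaluation to certify $h\neq 0$; both the averaging identity and the cardinality bound check out.
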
 

\begin{proof}
The conclusion is obvious for $a=0$ or $c=0$. We now assume that $a, c \in \gf(q)^*$. Recall the set $H(a^{-2},h)$ defined 
in Lemma \ref{lem-July0a}. By Lemma  \ref{lem-July0a}, there is a basis $\{c_1, c_2, \ldots, c_m\} \subset H(a^{-2},h)$ 
of $\gf(q)$ over $\gf(3)$. It then follows from \eqref{eqn-july3} that $\tr(a^2x^2)\tr(c_ix) \in \C_3(\bD_d(\C(m,3)))$ 
for all $i$. Consequently, $\tr(a^2x^2)\tr(cx) \in \C_3(\bD_d(\C(m,3)))$ for all $c \in \gf(q)$ and all $a \in \gf(q)^*$.   

Recall the set $Q(q)$ defined in Lemma  \ref{lem-July0b}. 
For each fixed $c \in \gf(q)^*$, by Lemma  \ref{lem-July0b} there is a basis $\{a_1, a_2, \ldots, a_m\} \subset Q(q)$ 
of $\gf(q)$ over $\gf(3)$ such that $\tr(a_ix^2)\tr(cx) \in \C_3(\bD_d(\C(m,3)))$. It then follows that 
$\tr(ax^2)\tr(cx) \in \C_3(\bD_d(\C(m,3)))$ for all $a \in \gf(q)$. The desired conclusion then follows. 
\end{proof}

\begin{lemma}\label{lem-july192} 
Let $m \geq 4$. 
For all $c \in \gf(q)$, $\tr(cx) \in \C_3(\bD_d(\C(m,3)))$. 
\end{lemma}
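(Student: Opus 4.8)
\textbf{Proof proposal for Lemma \ref{lem-july192}.}

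The plan is to extract $\tr(cx)$ from the membership relation \eqref{eqn-july3}, namely
\[
\tr(a^2x^2)\tr(cx) + (\tr((c/a)^2)+h)\tr(cx) \in \C_3(\bD_d(\C(m,3)))
\]
by isolating the product term $\tr(a^2x^2)\tr(cx)$ using the preceding lemma. Indeed, Lemma \ref{lem-july191} already tells us that $\tr(a^2x^2)\tr(cx) \in \C_3(\bD_d(\C(m,3)))$ for all admissible $a,c$ (when $m \geq 5$ is odd). Subtracting this product from \eqref{eqn-july3} leaves the scalar multiple $(\tr((c/a)^2)+h)\tr(cx)$ inside the code. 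So the first step is this subtraction, which reduces the task to showing that for each fixed $c \in \gf(q)^*$ one can choose $a \in \gf(q)^*$ making the coefficient $\tr((c/a)^2)+h$ a nonzero element of $\gf(3)$; dividing by that nonzero scalar then gives $\tr(cx) \in \C_3(\bD_d(\C(m,3)))$, and the case $c=0$ is trivial.

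The crux is therefore the existence of such an $a$. Writing $z = c/a$, as $a$ ranges over $\gf(q)^*$ the element $z$ ranges over all of $\gf(q)^*$ (for fixed $c \neq 0$), so I need to guarantee that $\tr(z^2)+h \neq 0$ for at least one $z \in \gf(q)^*$. This is exactly the kind of statement already used in the proof of Lemma \ref{lem-july197}: the value $\tr(z^2)+h$ cannot be identically $0$ on $\gf(q)^*$, since the codeword $(\tr(x^2)+h)_{x \in \gf(q)}$ has minimum weight $d = 2\times 3^{m-1} - 3^{(m-1)/2} > 0$ and hence is not the zero vector, so its support is nonempty. Thus some $z_0 \in \gf(q)$ satisfies $\tr(z_0^2)+h \neq 0$, and $z_0 \neq 0$ because $\tr(0)+h = h$ and the minimum-weight codeword is nonzero away from $0$ as well; in any case a nonzero witness exists. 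Setting $a = c/z_0$ furnishes the required coefficient.

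A small technical point to watch is the hypothesis mismatch: the lemma is stated for $m \geq 4$, but the product relation from Lemma \ref{lem-july191} is proved only for odd $m \geq 5$. I expect the intended reading is that this lemma is applied in the odd-$m$ regime consistent with the surrounding development (the chain of lemmas ultimately treats $m \geq 3$ odd), so I would either invoke Lemma \ref{lem-july191} under its stated hypotheses or, for the small leftover cases, verify the nonvanishing of $\tr(z^2)+h$ directly from the weight distribution in Table \ref{tab-CG328c}. The genuinely substantive input is Lemma \ref{lem-july191}; once the product term is available, the remaining argument is the elementary linear-algebra step of dividing out a nonzero $\gf(3)$-scalar.

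The main obstacle I anticipate is not the algebra but ensuring the coefficient $\tr((c/a)^2)+h$ can be made nonzero uniformly in $c$; this is handled cleanly by the surjectivity of $a \mapsto c/a$ on $\gf(q)^*$ together with the nonvanishing of the trace-square-plus-$h$ map, both of which follow from facts already established in the excerpt.
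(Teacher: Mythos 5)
Your proposal is correct and follows essentially the same route as the paper: invoke Lemma \ref{lem-july191} to remove the product term from \eqref{eqn-july3}, observe that $\tr((c/a)^2)+h$ can be made a nonzero element of $\gf(3)$ by a suitable choice of $a$ (the paper arranges it to equal $1$; your nonvanishing-plus-division argument is equivalent), and conclude. Your added remarks on the surjectivity of $a\mapsto c/a$ and on the hypothesis mismatch between $m\ge 4$ here and odd $m\ge 5$ in Lemma \ref{lem-july191} are accurate but do not change the substance.
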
 

\begin{proof}
By Lemma \ref{lem-july191}, $\tr(a^2x^2)\tr(cx) \in \C_3(\bD_d(\C(m,3)))$. It then follows from \eqref{eqn-july3}  
that  
$$(\tr((c/a)^2)+h) \tr(cx) \in \C_3(\bD_d(\C(m,3)))$$ 
for all $a \in \gf(q)^*$ and $c \in \gf(q)$. For each $c$ there 
is an $a \in \gf(q)$ such that  $\tr((c/a)^2)+h=1$. Consequently,  $\tr(cx) \in \C_3(\bD_d(\C(m,3)))$. This completes 
the proof.   
\end{proof}

\begin{lemma}\label{lem-july193} 
Let $m \geq 4$. 
For all $b_1, b_2 \in \gf(q)$, $\tr(b_1x) \tr(b_2x)  \in \C_3(\bD_d(\C(m,3)))$. 

\end{lemma}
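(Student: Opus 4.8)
The plan is to reduce the statement to the single family of ``squares'' $\tr(cx)^2$ and then recover all products by polarization. Over $\gf(3)$ one has the pointwise identity
\[
\tr(b_1x)\tr(b_2x) = \tr(b_1x)^2 + \tr(b_2x)^2 - \tr((b_1+b_2)x)^2,
\]
which follows by expanding the right-hand side with $u=\tr(b_1x)$ and $v=\tr(b_2x)$ and using $-2=1$ in $\gf(3)$. Since $\C_3(\bD_d(\C(m,3)))$ is an $\gf(3)$-linear subspace, it therefore suffices to prove that $\tr(cx)^2 \in \C_3(\bD_d(\C(m,3)))$ for every $c \in \gf(q)$; the three summands on the right then combine to give the desired product.

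First I would extract $\tr(cx)^2$ from \eqref{eqn-july4}. Denote by $P(a,c)$ the element exhibited in \eqref{eqn-july4}, which lies in $\C_3(\bD_d(\C(m,3)))$ for every $(a,c)\in\gf(q)^*\times\gf(q)$. The case $c=0$ is trivial, so fix $c\neq 0$. In the difference $P(a,c)-P(a,0)$ the terms $\tr(a^2x^2)^2$ cancel identically, leaving
\[
\tr(cx)^2 + \bigl[(\tr((c/a)^2)+h)^2 - h^2\bigr] - \tr((c/a)^2)\,\tr(a^2x^2) \in \C_3(\bD_d(\C(m,3))).
\]
The bracketed quantity is a constant, so the only obstruction to reading off $\tr(cx)^2$ is the cross term $\tr((c/a)^2)\,\tr(a^2x^2)$. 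I would annihilate it by choosing $a\in\gf(q)^*$ with $\tr((c/a)^2)=0$: for such an $a$ the constant $(\tr((c/a)^2)+h)^2-h^2$ also vanishes, and the displayed element is exactly $\tr(cx)^2$, whence $\tr(cx)^2\in\C_3(\bD_d(\C(m,3)))$.

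The point to secure is thus the existence, for each fixed $c\neq 0$, of some $a\in\gf(q)^*$ with $\tr((c/a)^2)=0$; equivalently, of a nonzero $y_0\in\gf(q)$ with $\tr(y_0^2)=0$ (and then $a=c/y_0$). This is precisely the assertion that the set $H(1,0)=\{x\in\gf(q):\tr(x^2)=0\}$ of Lemma~\ref{lem-July0a} contains a nonzero element. Since $m\geq 4$, Lemma~\ref{lem-July0a} gives $\Delta(H(1,0))=\gf(q)$, which forces $|H(1,0)|\geq 2$; as $0\in H(1,0)$, a nonzero $y_0$ with $\tr(y_0^2)=0$ indeed exists. (Alternatively, a short Gauss-sum count yields $|H(1,0)|=3^{m-1}$ for odd $m$, again more than $1$.) With $\tr(cx)^2$ now available for all $c$, the polarization identity of the first paragraph completes the argument. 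I expect the existence of a trace-zero nonzero square to be the only genuinely nontrivial ingredient: the rest is the cancellation of already-known code elements together with $\gf(3)$-linear algebra.
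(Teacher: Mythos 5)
Your proof is correct, but it takes a genuinely different route from the paper's. The paper works directly with two instances of \eqref{eqn-cont-1} (constants stripped via Lemma \ref{lem-july197}): choosing $a$ with $\tr(b_1b_2/a^2)=0$ and setting $c_1=(-b_1-b_2)/a$, $c_2=(-b_1+b_2)/a$, it subtracts the two instances and factors the resulting difference of squares $\tr(ac_1x)^2-\tr(ac_2x)^2=\tr(b_1x)\tr(b_2x)$, so the product appears in one stroke and the unwanted term $\tr(b_1b_2/a^2)\tr(a^2x^2)$ dies by the choice of $a$. You instead first establish $\tr(cx)^2\in\C_3(\bD_d(\C(m,3)))$ for all $c$ --- which is the paper's Lemma \ref{lem-july194}, there deduced as a corollary of the present lemma --- by forming $P(a,c)-P(a,0)$ from \eqref{eqn-july4} with $a$ chosen so that $\tr((c/a)^2)=0$, and then recover the general product by the polarization identity $uv=u^2+v^2-(u+v)^2$ over $\gf(3)$. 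So you reverse the paper's logical order of Lemmas \ref{lem-july193} and \ref{lem-july194}, with no circularity since \eqref{eqn-july4} is derived independently of both. The two arguments hinge on the same kind of existence statement (a nonzero element $y$ with $\tr(\lambda y^2)=0$; the paper asserts its version without proof, while you justify yours via Lemma \ref{lem-July0a} or a direct count), and your version has the mild advantage of not needing Lemma \ref{lem-july197}, since your choice of $a$ kills the constant term as well as the cross term. Both approaches are sound and of comparable length.
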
 

\begin{proof}
One can find  $a \in  \gf(q)^*$ such that $\tr(\frac{b_1b_2}{a^2}) = 0$.
Let $c_1=\frac{-b_1-b_2}{a}$ and $c_2=\frac{-b_1+b_2}{a}$. 
It follows from \eqref{eqn-cont-1} and Lemma \ref{lem-july197} that 
\begin{align}\label{sub-c1}
\tr(a^2x^2)^2 +  \tr(ac_1x)^2 - (\tr(c_1^2)+h) \tr(a^2x^2)  \in \C_3(\bD_d(\C(m,3)))
\end{align}
and 
\begin{align}\label{sub-c2}
\tr(a^2x^2)^2 +  \tr(ac_2x)^2 - (\tr(c_2^2)+h) \tr(a^2x^2)  \in \C_3(\bD_d(\C(m,3))).
\end{align}
Subtracting  (\ref{sub-c2}) from (\ref{sub-c1}) yields 
$$(\tr (a(c_1+c_2)x))(\tr (a(c_1-c_2)x))- (\tr(c_1^2-c_2^2))\tr(a^2x^2)\in \C_3(\bD_d(\C(m,3))),$$
which is the same as 
$$\tr (b_1x)\tr (b_2x)- \tr\left (\frac{b_1b_2}{a^2} \right )\tr(a^2x^2)\in \C_3(\bD_d(\C(m,3))).$$
The desired conclusion then follows from $\tr(\frac{b_1b_2}{a^2}) = 0$. 

\end{proof}

As a corollary of Lemma \ref{lem-july193}, we have the following. 

 \begin{lemma}\label{lem-july194} 
Let $m \geq 4$. 
For all $a \in \gf(q)$, $\tr(ax)^2  \in \C_3(\bD_d(\C(m,3)))$. 
\end{lemma}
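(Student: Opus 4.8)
The statement to prove is Lemma~\ref{lem-july194}: for all $a \in \gf(q)$, $\tr(ax)^2 \in \C_3(\bD_d(\C(m,3)))$, and it is explicitly flagged as a corollary of Lemma~\ref{lem-july193}, which establishes $\tr(b_1 x)\tr(b_2 x) \in \C_3(\bD_d(\C(m,3)))$ for all $b_1, b_2 \in \gf(q)$.

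The plan is essentially to specialize. First I would simply set $b_1 = b_2 = a$ in the conclusion of Lemma~\ref{lem-july193}. Since that lemma asserts membership of the product $\tr(b_1 x)\tr(b_2 x)$ in the code $\C_3(\bD_d(\C(m,3)))$ for \emph{every} pair $(b_1, b_2) \in \gf(q) \times \gf(q)$, the diagonal choice $b_1 = b_2 = a$ is a legitimate instance for any fixed $a \in \gf(q)$. This immediately yields $\tr(ax)\tr(ax) = \tr(ax)^2 \in \C_3(\bD_d(\C(m,3)))$.

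The only thing to check is that the hypothesis $m \geq 4$ of Lemma~\ref{lem-july193} matches the hypothesis stated for Lemma~\ref{lem-july194}, which it does. I would also note that the case $a = 0$ is trivial, since then $\tr(ax)^2$ is the zero vector, which lies in any linear code. So the argument is genuinely a one-line deduction: quote Lemma~\ref{lem-july193}, substitute, and read off the square.

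I do not anticipate any real obstacle here, since all the work has already been done in Lemma~\ref{lem-july193}; the present statement is a clean specialization and is labeled as a corollary precisely for that reason. If I wanted to be self-contained rather than invoking the diagonal case directly, an alternative route would be to write $\tr(ax)^2$ as a combination $\tfrac{1}{2}\left[(\tr(b_1 x)+\tr(b_2 x))^2 - \tr(b_1 x)^2 - \tr(b_2 x)^2\right]$-type identity, but that is circular and unnecessary; the direct substitution into Lemma~\ref{lem-july193} is the cleanest path.
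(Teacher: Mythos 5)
Your proof is correct and matches the paper exactly: the paper gives no separate argument, simply presenting this as a corollary of Lemma~\ref{lem-july193}, which is precisely your substitution $b_1=b_2=a$. Nothing further is needed.
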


\begin{lemma}\label{lem-july196} 
Let $m \geq 4$. 
For all $a \in \gf(q)$, $\tr(ax^2)  \in \C_3(\bD_d(\C(m,3)))$. 
\end{lemma}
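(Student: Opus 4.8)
The plan is to first establish the statement for square coefficients, i.e.\ to show $\tr(a^2x^2) \in \C_3(\bD_d(\C(m,3)))$ for all $a \in \gf(q)^*$, and then to bootstrap to arbitrary coefficients using the $\gf(3)$-linearity of the map $a \mapsto \tr(ax^2)$. The engine for the first stage is the membership relation (\ref{eqn-july4}), whose nuisance terms have already been shown to lie in the code by the preceding lemmas.

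First I would fix $a \in \gf(q)^*$ and peel off the known members of (\ref{eqn-july4}). The term $\tr(cx)^2$ lies in $\C_3(\bD_d(\C(m,3)))$ by Lemma \ref{lem-july194}, and the term $(\tr((c/a)^2)+h)^2$ is a constant, hence lies in the code since $1 \in \C_3(\bD_d(\C(m,3)))$ by Lemma \ref{lem-july197}. Subtracting these two members from (\ref{eqn-july4}) leaves
$$\tr(a^2x^2)^2 - (\tr((c/a)^2)+h)\,\tr(a^2x^2) \in \C_3(\bD_d(\C(m,3)))$$
for every $c \in \gf(q)$. The crucial observation is that the quadratic term $\tr(a^2x^2)^2$ does not depend on $c$, so it can be annihilated by differencing. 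Since $a \neq 0$, the element $c/a$ runs over all of $\gf(q)$ as $c$ does, so the scalar $\mu(c):=\tr((c/a)^2)+h \in \gf(3)$ equals $h$ at $c=0$ but, by the non-vanishing of $\tr(x^2)$ recorded in Table \ref{tab-CG328c}, takes a different value for a suitable $c$. Choosing $c_1,c_2$ with $\mu(c_1)\neq\mu(c_2)$ and subtracting the two corresponding members cancels $\tr(a^2x^2)^2$ and yields $(\mu(c_2)-\mu(c_1))\,\tr(a^2x^2) \in \C_3(\bD_d(\C(m,3)))$; as the coefficient is a nonzero element of $\gf(3)$, this proves $\tr(a^2x^2) \in \C_3(\bD_d(\C(m,3)))$ for all $a \in \gf(q)^*$, equivalently $\tr(sx^2) \in \C_3(\bD_d(\C(m,3)))$ for every nonzero square $s \in Q(q)$.

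Finally I would extend from squares to all coefficients. The map $a \mapsto \tr(ax^2)$ is $\gf(3)$-linear in $a$, so the set $\{a \in \gf(q): \tr(ax^2) \in \C_3(\bD_d(\C(m,3)))\}$ is a $\gf(3)$-subspace of $\gf(q)$, and by the previous step it contains $Q(q)$. By Lemma \ref{lem-July0b}, $\Delta(Q(q))=\gf(q)$, so $Q(q)$ spans $\gf(q)$ over $\gf(3)$; hence this subspace is all of $\gf(q)$. Together with the trivial case $a=0$, this gives $\tr(ax^2) \in \C_3(\bD_d(\C(m,3)))$ for every $a \in \gf(q)$. The main obstacle is the middle step: one must verify that $\mu(c)$ is genuinely non-constant so that the subtraction produces an invertible scalar of $\gf(3)$, which is precisely where the weight distribution of $\C(m,3)$ is used to rule out $\tr(x^2)$ being identically zero.
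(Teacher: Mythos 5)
Your proposal is correct and follows essentially the same route as the paper: both start from (\ref{eqn-july4}), strip off $\tr(cx)^2$ via Lemma \ref{lem-july194}, difference over two choices of $c$ giving distinct values of $\tr((c/a)^2)+h$ to cancel $\tr(a^2x^2)^2$ and isolate $\tr(a^2x^2)$, and then pass from squares to all of $\gf(q)$ via Lemma \ref{lem-July0b}. Your version is marginally cleaner in that it only requires $\mu(c)$ to be non-constant rather than to attain the specific values $1$ and $2$, and it makes the final linear-algebra step (the good coefficients form a $\gf(3)$-subspace containing $Q(q)$, hence containing $\Delta(Q(q))=\gf(q)$) explicit where the paper leaves it implicit.
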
  

\begin{proof}
By Lemma \ref{lem-july194}, $\tr(cx)^2 \in  \C_3(\bD_d(\C(m,3)))$. It then follows from \eqref{eqn-july4} 
that 
\begin{eqnarray}\label{eqn-9july1}
\tr(a^2x^2)^2 +  (\tr((c/a)^2)+h)^2 - (\tr((c/a)^2)+h) \tr(a^2x^2)  \in \C_3(\bD_d(\C(m,3))) 
\end{eqnarray} 
for all $a \in \gf(q)^*$ and $c \in \gf(q)$. Choose $c_1$ and $c_2$ in $\gf(q)$ such that 
$$ 
\tr((c_1/a)^2)+h=2 \mbox{ and }  \tr((c_2/a)^2)+h=1. 
$$ 
Plugging $c_1$ and $c_2$ into (\ref{eqn-9july1}) yields 
\begin{eqnarray*}
&& \tr(a^2x^2)^2 +  (2)^2 - 2 \tr(a^2x^2)  \in \C_3(\bD_d(\C(m,3))), \\
&& \tr(a^2x^2)^2 +  (1)^2 - 1 \tr(a^2x^2)  \in \C_3(\bD_d(\C(m,3))). 
\end{eqnarray*} 
Taking the difference of the two functions above shows that $\tr(a^2x^2)  \in \C_3(\bD_d(\C(m,3))$. 
The desired conclusion then follows from Lemma \ref{lem-July0b}.  

\end{proof}

\begin{lemma}\label{lem-july195} 
Let $m\ge 5$ be an odd integer. For all $b_1, b_2 \in \gf(q)$, $\tr(b_1x^2) \tr(b_2x^2)  \in \C_3(\bD_d(\C(m,3)))$. 
\end{lemma}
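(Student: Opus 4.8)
The plan is to reduce the statement to the single assertion that $\tr(bx^2)^2 \in \C_3(\bD_d(\C(m,3)))$ for \emph{every} $b \in \gf(q)$, and then to recover the bilinear products $\tr(b_1x^2)\tr(b_2x^2)$ by a polarization identity. The reduction is immediate in characteristic $3$: expanding $\tr((b_1+b_2)x^2)^2$ gives
\[
\tr((b_1+b_2)x^2)^2 = \tr(b_1x^2)^2 + 2\,\tr(b_1x^2)\tr(b_2x^2) + \tr(b_2x^2)^2,
\]
so that $\tr(b_1x^2)\tr(b_2x^2) = \tr(b_1x^2)^2 + \tr(b_2x^2)^2 - \tr((b_1+b_2)x^2)^2$, since $2^{-1} = -1$ in $\gf(3)$. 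Hence once the squares of quadratic traces all lie in the code, each product does too, the code being $\gf(3)$-linear.

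First I would isolate the leading square term in \eqref{eqn-july4}. For $a \in \gf(q)^*$ and $c \in \gf(q)$, the three remaining summands of \eqref{eqn-july4} are already known to lie in $\C_3(\bD_d(\C(m,3)))$: the term $\tr(cx)^2$ by Lemma~\ref{lem-july194}; the constant $(\tr((c/a)^2)+h)^2$ as a scalar multiple of $1 \in \C_3(\bD_d(\C(m,3)))$ (Lemma~\ref{lem-july197}); and $(\tr((c/a)^2)+h)\tr(a^2x^2)$ as a scalar multiple of $\tr(a^2x^2) \in \C_3(\bD_d(\C(m,3)))$ (Lemma~\ref{lem-july196}). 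Subtracting them from \eqref{eqn-july4} leaves $\tr(a^2x^2)^2 \in \C_3(\bD_d(\C(m,3)))$ for all $a \in \gf(q)^*$, that is, $\tr(sx^2)^2 \in \C_3(\bD_d(\C(m,3)))$ for every nonzero square $s$.

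The hard part will be to pass from squares to arbitrary $b$, and this is exactly where the oddness of $m$ enters. Since $m$ is odd we have $q \equiv 3 \pmod 4$, so $-1$ is a non-square in $\gf(q)$; consequently every non-square is $(-1)$ times a nonzero square. Thus for any $b \neq 0$ either $b$ or $-b$ is a nonzero square, and because $\tr(bx^2)^2 = \tr((-b)x^2)^2$ the previous step yields $\tr(bx^2)^2 \in \C_3(\bD_d(\C(m,3)))$ for all $b \neq 0$; the case $b = 0$ is trivial. Feeding this into the polarization identity above, applied to $b_1$, $b_2$ and $b_1 + b_2$, completes the proof. I expect the only genuine obstacle to be this squares-to-non-squares step: a naive attempt to polarize among squares alone would stall, because $b_1 + b_2$ need not be a square, and it is precisely the non-square character of $-1$ for odd $m$ that repairs the gap.
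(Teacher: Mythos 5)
Your proof is correct and follows essentially the same route as the paper's: both arguments first extract $\tr(a^2x^2)^2 \in \C_3(\bD_d(\C(m,3)))$ from \eqref{eqn-july4} via Lemmas \ref{lem-july194}, \ref{lem-july196} and \ref{lem-july197}, and both then use the fact that $-1$ is a non-square for odd $m$ to reach arbitrary $b_1,b_2$. The only difference is organizational: the paper factors $\tr(a_1^2x^2)^2-\tr(a_2^2x^2)^2=\tr((a_1^2+a_2^2)x^2)\,\tr((a_1^2-a_2^2)x^2)$ and chooses signs $\epsilon_1,\epsilon_2$ making $\epsilon_1(b_1+b_2)$ and $\epsilon_2(b_2-b_1)$ squares, whereas you first upgrade $\tr(bx^2)^2\in\C_3(\bD_d(\C(m,3)))$ to all $b$ via $\tr(bx^2)^2=\tr(-bx^2)^2$ and then apply the standard polarization identity --- a slightly cleaner packaging of the same idea that avoids the paper's case split on $\epsilon_1,\epsilon_2$.
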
  

\begin{proof}
By Lemma \ref{lem-july194}, $\tr(cx)^2 \in  \C_3(\bD_d(\C(m,3)))$. 
By Lemma \ref{lem-july196}, $\tr(a^2x^2) \in  \C_3(\bD_d(\C(m,3)))$.
It then follows from \eqref{eqn-july4} and Lemma \ref{lem-july197}
that 
\begin{align*}
\tr(a^2x^2)^2 +  (\tr((c/a)^2)+h)^2   \in \C_3(\bD_d(\C(m,3))) 
\end{align*} 
for all $a \in \gf(q)^*$ and $c \in \gf(q)$.  
It then follows Lemma \ref{lem-july197} that $\tr(a^2x^2)^2  \in \C_3(\bD_d(\C(m,3)))$. 
Consequently, 
\begin{align}\label{eq-a^2-a^2}
\tr(a_1^2x^2)^2 - \tr(a_2^2x^2)^2 = \tr((a_1^2+a_2^2)x^2)  \tr((a_1^2-a_2^2)x^2)  \in \C_3(\bD_d(\C(m,3)))  
\end{align}
for all $a_1$ and $a_2$. 
Since $-1$ is a quadratic non-residue, there are $\epsilon_1, \epsilon_2\in \{1, -1\}$ such that $\epsilon_1(b_1+b_2)$ and $\epsilon_2(-b_1+b_2)$
are quadratic residue. Let $a_1, a_2 \in \gf(q)$ such that $\epsilon_1(b_1+b_2)=a_1^2$ and $\epsilon_2(-b_1+b_2)=a_2^2$.
Then, $a_1^2+a_2^2=(\epsilon_1-\epsilon_2)b_1+(\epsilon_1+\epsilon_2)b_2$ and $a_1^2-a_2^2=(\epsilon_1+\epsilon_2)b_1+(\epsilon_1-\epsilon_2)b_2$.
By (\ref{eq-a^2-a^2}), one has
\begin{align*}
\tr(((\epsilon_1-\epsilon_2)b_1+(\epsilon_1+\epsilon_2)b_2)x^2)  \tr(((\epsilon_1+\epsilon_2)b_1+(\epsilon_1-\epsilon_2)b_2)x^2)  \in \C_3(\bD_d(\C(m,3))).
\end{align*}
If $\epsilon_1=\epsilon_2$, $\tr(((\epsilon_1+\epsilon_2)b_2)x^2)  \tr(((\epsilon_1+\epsilon_2)b_1)x^2)= \tr(b_1x^2) \tr(b_2x^2)   \in \C_3(\bD_d(\C(m,3))).$\\
If $\epsilon_1=-\epsilon_2$, $\tr(((\epsilon_1-\epsilon_2)b_2)x^2)  \tr(((\epsilon_1-\epsilon_2)b_1)x^2)= \tr(b_1x^2) \tr(b_2x^2)   \in \C_3(\bD_d(\C(m,3))).$\\
This completes the proof. 
\end{proof}

We are now ready to prove the following theorem. 

\begin{theorem}\label{thm-base1} 
Let $m \geq 5$ be odd. Then $\C_3(\bD_d(\C(m,3)))$ is linearly spanned by the functions in the set  
\begin{eqnarray}\label{set-thm1}
\left\{ \tr(ax^2)\tr(bx^2), \tr(ax^2)\tr(bx), \tr(ax) \tr(bx), \tr(bx), 1: a, b \in \gf(q) \right\}. 
\end{eqnarray} 
\end{theorem}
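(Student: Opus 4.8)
The plan is to establish the stated equality by proving the two inclusions separately. Write $S$ for the set displayed in \eqref{set-thm1} and let $\langle S\rangle$ be its $\gf(3)$-span, so the goal is $\C_3(\bD_d(\C(m,3)))=\langle S\rangle$. The inclusion $\langle S\rangle\subseteq\C_3(\bD_d(\C(m,3)))$ is essentially free: each of the five types of generators of $S$ has already been shown to lie in the code. Indeed, since $m\ge 5$ is odd, $\tr(ax^2)\tr(bx^2)$ lies in the code by Lemma~\ref{lem-july195}, $\tr(ax^2)\tr(bx)$ by Lemma~\ref{lem-july191}, $\tr(ax)\tr(bx)$ by Lemma~\ref{lem-july193}, $\tr(bx)$ by Lemma~\ref{lem-july192}, and the constant $1$ by Lemma~\ref{lem-july197}. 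Taking $\gf(3)$-linear combinations then gives $\langle S\rangle\subseteq\C_3(\bD_d(\C(m,3)))$.

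For the reverse inclusion I would start from Lemma~\ref{lem-bas}, which says the code is spanned by the functions $\gamma_{a,b}:=(\tr((ax+b)^2)+h)^2$ with $(a,b)\in\gf(q)^*\times\gf(q)$; hence it suffices to show $\gamma_{a,b}\in\langle S\rangle$ for every such pair. Expanding via \eqref{eqn-july1}, each $\gamma_{a,b}$ is a $\gf(3)$-linear combination of the six functions $\tr(a^2x^2)^2$, $\tr(abx)^2$, $(\tr(b^2)+h)^2$, $\tr(a^2x^2)\tr(abx)$, $(\tr(b^2)+h)\tr(a^2x^2)$ and $(\tr(b^2)+h)\tr(abx)$. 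Five of these are visibly of the required shape: $\tr(a^2x^2)^2$ is of type $\tr(\cdot\,x^2)\tr(\cdot\,x^2)$, $\tr(abx)^2$ is of type $\tr(\cdot\,x)\tr(\cdot\,x)$, $(\tr(b^2)+h)^2$ is an element of $\gf(3)$ and hence a scalar multiple of $1$, $\tr(a^2x^2)\tr(abx)$ is of type $\tr(\cdot\,x^2)\tr(\cdot\,x)$, and $(\tr(b^2)+h)\tr(abx)$ is a scalar multiple of $\tr(\cdot\,x)$.

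The only term that is not literally a member of $S$ is $(\tr(b^2)+h)\tr(a^2x^2)$, a scalar multiple of the single quadratic trace $\tr(a^2x^2)$, and dealing with it is the crux of the argument. My plan is to observe that such an isolated trace is redundant because it is already a $\gf(3)$-combination of products of linear traces. Concretely, fix a basis $\{e_1,\dots,e_m\}$ of $\gf(q)$ over $\gf(3)$ together with its trace-dual basis $\{e_1^*,\dots,e_m^*\}$, so that each coordinate $x_i=\tr(e_i^* x)$ is itself a linear trace; then for any $c\in\gf(q)$
\[
\tr(cx^2)=\sum_{i,j}\tr(c\,e_ie_j)\,x_ix_j=\sum_{i,j}\tr(c\,e_ie_j)\,\tr(e_i^* x)\,\tr(e_j^* x)\in\langle S\rangle .
\]
This places the last term of the expansion in $\langle S\rangle$, so every $\gamma_{a,b}\in\langle S\rangle$ and the reverse inclusion follows; combining the two inclusions gives the theorem. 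I expect the sole genuine obstacle to be this final observation---recognising that an isolated $\tr(cx^2)$ need not appear among the generators because it is absorbed into the products of linear traces---while the rest is bookkeeping on \eqref{eqn-july1} together with the membership lemmas already established.
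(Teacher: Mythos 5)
Your proposal is correct and follows essentially the same route as the paper: the forward inclusion cites exactly the same five membership lemmas, and the reverse inclusion reduces to expanding $(\tr((ax+b)^2)+h)^2$ via \eqref{eqn-july1}. Your explicit dual-basis observation that an isolated $\tr(cx^2)$ lies in the span of the products $\tr(ax)\tr(bx)$ is a welcome bit of extra care --- the paper asserts the reverse inclusion in one line and leaves the absorption of the $(\tr(b^2)+h)\tr(a^2x^2)$ term implicit (your claim is consistent with Lemma~\ref{lem-july1911}, since $x^2=x^{3^0+1}$).
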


\begin{proof}
By Lemma \ref{lem-july195}, $ \tr(ax^2)\tr(bx^2) \in \C_3(\bD_d(\C(m,3)))$. 
It follows from Lemma \ref{lem-july191} that all functions $ \tr(ax^2)\tr(bx) \in \C_3(\bD_d(\C(m,3)))$. 
By Lemma \ref{lem-july193}, $\tr(ax)\tr(bx) \in \C_3(\bD_d(\C(m,3)))$. 
By Lemma \ref{lem-july192}, $\tr(bx) \in \C_3(\bD_d(\C(m,3)))$. 
By Lemma \ref{lem-july197}, $1 \in \C_3(\bD_d(\C(m,3)))$.   
Consequently, $\C_3(\bD_d(\C(m,3)))$ contains the linear space spanned by the functions in the set 
in (\ref{set-thm1}). 

By \eqref{eqn-july1}, all the functions $(\tr((ax+b)^2)+h)^2$ can be generated by the functions in the set 
in (\ref{set-thm1}). This completes the proof. 
\end{proof}

For convenience of discussion below, we use $\left\langle  S \right\rangle$ to denote the space over 
$\gf(3)$ linearly spanned by the functions from $\gf(q)$ to $\gf(3)$ in the set $S$. To achieve our 
objective , we have to prove the next lemmas. 

\begin{lemma}\label{lem-july1911} 
Let $m \geq 2$. Then 
\begin{eqnarray}\label{eqn-july1911}
\left\langle \{ \tr(ax)\tr(bx): a, b \in \gf(q)  \} \right\rangle = 
\left\{  \sum_{j=0}^{m-1} \tr(c_j x^{3^j+1}): c_i \in \gf(q)    \right\}.
\end{eqnarray}
\end{lemma}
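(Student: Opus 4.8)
The plan is to prove the two inclusions between the $\gf(3)$-spaces on either side of \eqref{eqn-july1911}. Write $V=\left\langle \{\tr(ax)\tr(bx):a,b\in\gf(q)\}\right\rangle$ and let $W$ denote the set on the right-hand side. First I would check that $W$ is itself a $\gf(3)$-subspace of the functions $\gf(q)\to\gf(3)$, since $\tr(c_jx^{3^j+1})+\tr(c_j'x^{3^j+1})=\tr((c_j+c_j')x^{3^j+1})$ and $\lambda\,\tr(c_jx^{3^j+1})=\tr(\lambda c_jx^{3^j+1})$ for $\lambda\in\gf(3)$. It then suffices to show $V\subseteq W$ and that every generator $\tr(cx^{3^{j_0}+1})$ of $W$ lies in $V$.

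For the inclusion $V\subseteq W$ I would use the elementary trace identity
$$\tr(\alpha)\tr(\beta)=\sum_{j=0}^{m-1}\tr(\alpha\beta^{3^j}),$$
which follows by expanding both traces as sums of Frobenius powers and re-indexing (for fixed $i$, the index $i+j$ runs over all of $\Z/m$ as $j$ does, and $\beta^{3^{i+j}}$ depends only on $i+j \bmod m$). Taking $\alpha=ax$ and $\beta=bx$ gives
$$\tr(ax)\tr(bx)=\sum_{j=0}^{m-1}\tr\!\left(ab^{3^j}x^{3^j+1}\right)\in W,$$
so every spanning function of $V$ lies in $W$, whence $V\subseteq W$.

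The harder direction is $W\subseteq V$, i.e. showing $\tr(cx^{3^{j_0}+1})\in V$ for every $c\in\gf(q)$ and every $j_0$. The idea is to form a $\gf(3)$-linear combination $\sum_{a,b}n_{a,b}\,\tr(ax)\tr(bx)$ with $n_{a,b}\in\gf(3)$ and to choose the coefficients so that, in the expansion above, every trace term except the $j_0$-th one cancels. By the identity this combination equals $\sum_{j=0}^{m-1}\tr(C_jx^{3^j+1})$ with $C_j=\sum_{a,b}n_{a,b}\,ab^{3^j}$, so I must realize the vector $(C_0,\dots,C_{m-1})\in\gf(q)^m$ as $(0,\dots,0,c,0,\dots,0)$. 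The key technical point that makes the $\gf(3)$-coefficient restriction harmless is the bilinear parametrization: choosing $n_{a,b}=\sum_r p_a^{(r)}q_b^{(r)}$ with $p^{(r)},q^{(r)}\in\gf(3)^{\gf(q)}$ and using $q_b^{3^j}=q_b$ for $q_b\in\gf(3)$, one obtains $C_j=\sum_r P_rQ_r^{3^j}$ where $P_r=\sum_a p_a^{(r)}a$ and $Q_r=\sum_b q_b^{(r)}b$ range freely over $\gf(q)$. Hence the attainable vectors $(C_j)_j$ are exactly the $\gf(q)$-linear span of the Frobenius-orbit vectors $(Q^{3^j})_{j=0}^{m-1}$, $Q\in\gf(q)$.

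Thus the crux reduces to: these Frobenius-orbit vectors span $\gf(q)^m$ over $\gf(q)$. This is precisely the invertibility of the Moore matrix $(\alpha_k^{3^j})_{0\le j,k\le m-1}$ associated to a $\gf(3)$-basis $\alpha_0,\dots,\alpha_{m-1}$ of $\gf(q)$, which holds exactly because the $\alpha_k$ are $\gf(3)$-linearly independent. Consequently any target $(0,\dots,0,c,0,\dots,0)$ is attainable, giving $\tr(cx^{3^{j_0}+1})\in V$ for all $c$ and $j_0$, and therefore $W\subseteq V$; combined with $V\subseteq W$ this yields the claimed equality, and the argument works for all $m\ge 2$. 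I expect the main obstacle to be this last direction: not the computation, but recognizing that passing to the parametrization $n_{a,b}=\sum_r p_a^{(r)}q_b^{(r)}$ converts the seemingly restrictive $\gf(3)$-span into the full $\gf(q)$-span of the Frobenius vectors, at which point the Moore determinant finishes the proof.
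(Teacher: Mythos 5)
Your proof is correct, but it takes a genuinely different route from the paper's. The paper fixes a normal basis $\{\alpha^{3^i}\}$ of $\gf(q)$ over $\gf(3)$, writes $x=\sum_i x_i\alpha^{3^i}$ with $x_i\in\gf(3)$, and expresses both sides of \eqref{eqn-july1911} as $\gf(3)$-spans of explicit quadratic forms in the coordinates $x_0,\dots,x_{m-1}$ (the families \eqref{eqn-set1} and \eqref{eqn-set2}); the equality of the two spaces is then reduced to checking that these two families coincide, a verification the paper asserts rather than spells out. You instead argue coordinate-free: your inclusion $V\subseteq W$ drops out of the identity $\tr(\alpha)\tr(\beta)=\sum_{j=0}^{m-1}\tr(\alpha\beta^{3^j})$, and for $W\subseteq V$ you note that decomposing an arbitrary $\gf(3)$-coefficient matrix $n_{a,b}$ into rank-one pieces shows the attainable coefficient vectors $(C_j)_j$ form exactly the $\gf(q)$-span of the Frobenius vectors $(Q^{3^j})_{j=0}^{m-1}$, which span $\gf(q)^m$ by the Moore determinant. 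Your version isolates the one genuinely nontrivial point --- that the a priori $\gf(3)$-span is in fact a full $\gf(q)$-span --- as a clean linear-algebra statement, and it transfers verbatim to Lemmas \ref{lem-july1912} and \ref{lem-july1913} upon replacing $x$ by $x^2$ in one or both factors; the paper's computation uses only elementary coordinate manipulations but leaves the decisive identification of the two spanning sets to the reader.
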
 

\begin{proof}
Let $R$ denote the linear subspace in the right-hand side of \eqref{eqn-july1911}. Let $\{\alpha^{3^i}: 0 \leq i \leq m-1\}$ 
be a normal basis of $\gf(q)$ over $\gf(3)$.  Let $x=\sum_{i=0}^{m-1} x_i \alpha^{3^i}$, where $x_i \in \gf(3)$. Then 
$x^{3^j}=\sum_{h=0}^{m-1} x_{(h-j) \bmod{m}} \alpha^{3^h}$ for all $0 \leq j \leq m-1$. We then deduce that 
\begin{eqnarray*}
x^{3^j+1} = \sum_{\ell =0}^{m-1} \left(  \sum_{\myatop{i+h=\ell \pmod{m}}{0 \leq i, \ h < m}} x_i x_{(h-j) \bmod{m}} \right) \alpha^{3^\ell}
\end{eqnarray*}
for all $0 \leq j \leq m-1$. It then follows that 
$$ 
\alpha^{3^u} x^{3^j+1} =  \sum_{\ell =0}^{m-1} \left(  \sum_{\myatop{i+h=\ell-u \pmod{m}}{0 \leq i, \ h < m}} x_i x_{(h-j) \bmod{m}} \right) \alpha^{3^\ell}
$$ 
and 
$$ 
\tr \left(\alpha^{3^u} x^{3^j+1} \right) = \tr(\alpha)  \sum_{\ell =0}^{m-1} \left(  \sum_{\myatop{i+h=\ell-u \pmod{m}}{0 \leq i, \ h < m}} x_i x_{(h-j) \bmod{m}} \right) 
$$ 
for all $0 \leq j \leq m-1$ and $0 \leq u \leq m-1$. 
Hence, $R$ is linearly spanned by the following  functions 
\begin{eqnarray}\label{eqn-set1}
 \sum_{\ell =0}^{m-1} \left(  \sum_{\myatop{i+h=\ell-u \pmod{m}}{0 \leq i, \ h < m}} x_i x_{(h-j) \bmod{m}} \right), \ 
0 \leq j \leq m-1, \ 0 \leq u \leq m-1.   
\end{eqnarray}

Let $L$ denote the linear subspace in the left-hand side of \eqref{eqn-july1911}. Note that 
$$ 
\tr\left(  \alpha^{3^u} x \right) = \tr(\alpha) \sum_{i=0}^{m-1} x_{(u-i) \bmod{m}}.   
$$ 
We have 
$$ 
\tr\left(  \alpha^{3^u} x \right)\tr\left(  \alpha^{3^v} x \right) = 
\tr(\alpha)^2 \left(  \sum_{i=0}^{m-1} x_{(u-i) \bmod{m}} \right) \left( \sum_{j=0}^{m-1} x_{(v-j) \bmod{m}} \right).  
$$
Hence, $L$ is linearly spanned by the following  functions
\begin{eqnarray}\label{eqn-set2}
 \left(  \sum_{i=0}^{m-1} x_{(u-i) \bmod{m}} \right) \left( \sum_{j=0}^{m-1} x_{(v-j) \bmod{m}} \right), \ 
 0 \leq u \leq m-1, \ 0 \leq v \leq m-1. 
\end{eqnarray}

It can be verified that the set of functions in (\ref{eqn-set2}) is the same as the set of functions in (\ref{eqn-set1}). 
It then follows that $R=L$. 
\end{proof}

\begin{lemma}\label{lem-july1912} 
Let $m \geq 2$. Then 
\begin{eqnarray}\label{eqn-july1912}
\left\langle \{ \tr(ax^2)\tr(bx): a, b \in \gf(q)  \} \right\rangle = 
\left\{  \sum_{j=0}^{m-1} \tr(c_j x^{2\times 3^j+1}): c_i \in \gf(q)    \right\}.
\end{eqnarray}
\end{lemma}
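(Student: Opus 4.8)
The plan is to follow the spirit of Lemma \ref{lem-july1911}, but to bypass the coordinate bookkeeping by first establishing an exact product expansion. Writing the trace as a sum of Frobenius powers, $\tr(ax^2)=\sum_{k=0}^{m-1}a^{3^k}x^{2\cdot 3^k}$ and $\tr(bx)=\sum_{l=0}^{m-1}b^{3^l}x^{3^l}$, I would multiply these out and group the resulting $m^2$ terms $a^{3^k}b^{3^l}x^{2\cdot 3^k+3^l}$ according to the difference index $j=(k-l)\bmod m$. The reindexing $(k,l)\mapsto(j,u):=((k-l)\bmod m,\,l)$ is a bijection of $\{0,\dots,m-1\}^2$, and (using $x^{3^m}=x$) the $j$-block collapses to $\sum_u (a^{3^j}b)^{3^u}x^{(2\cdot 3^j+1)3^u}=\tr(a^{3^j}b\,x^{2\cdot 3^j+1})$. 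This gives the key identity
\begin{equation}\label{eqn-star}
\tr(ax^2)\tr(bx)=\sum_{j=0}^{m-1}\tr\!\big(a^{3^j}b\,x^{2\cdot 3^j+1}\big),
\end{equation}
valid for all $a,b\in\gf(q)$, the diagonal $k=l$ producing the $j=0$ term $\tr(ab\,x^3)$. Identity \eqref{eqn-star} shows immediately that every generator of the left-hand space lies in the right-hand space, i.e.\ $L:=\left\langle\{\tr(ax^2)\tr(bx):a,b\in\gf(q)\}\right\rangle\subseteq R$, where $R$ denotes the right-hand side of \eqref{eqn-july1912}.

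For the reverse inclusion I would recast everything through the $\gf(3)$-linear map $F$ from $\gf(q)^m$ to the space of $\gf(3)$-valued functions on $\gf(q)$, defined by $F((c_j)_j)=\sum_{j=0}^{m-1}\tr(c_j x^{2\cdot 3^j+1})$, whose image is by definition exactly $R$. Since $F$ is $\gf(3)$-linear and \eqref{eqn-star} reads $\tr(ax^2)\tr(bx)=F\big((a^{3^j}b)_j\big)$, I obtain $L=F(V)$, where $V:=\left\langle\{(a^{3^j}b)_{j}:a,b\in\gf(q)\}\right\rangle\subseteq\gf(q)^m$ is the $\gf(3)$-span of the coefficient tuples. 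Hence it suffices to prove $V=\gf(q)^m$, for then $L=F(\gf(q)^m)=R$ and the equality in \eqref{eqn-july1912} follows.

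To see $V=\gf(q)^m$, write $\phi(a):=(a^{3^j})_{j=0}^{m-1}$, so the generators of $V$ are $b\,\phi(a)$. For fixed $a$ the set $\{b\,\phi(a):b\in\gf(q)\}$ is already closed under addition and $\gf(q)$-scaling, hence equals the $\gf(q)$-line $\gf(q)\,\phi(a)\subseteq V$. Letting $a$ run over a $\gf(3)$-basis $e_1,\dots,e_m$ of $\gf(q)$, the vectors $\phi(e_1),\dots,\phi(e_m)$ are the rows of the Moore matrix $\big(e_i^{3^j}\big)_{1\le i\le m,\,0\le j\le m-1}$, whose determinant is nonzero precisely because $e_1,\dots,e_m$ are $\gf(3)$-linearly independent. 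Thus $\phi(e_1),\dots,\phi(e_m)$ are $\gf(q)$-linearly independent and span $\gf(q)^m$, giving $\gf(q)^m=\sum_i\gf(q)\,\phi(e_i)\subseteq V$ and therefore $V=\gf(q)^m$.

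The only genuine content beyond routine manipulation is the regrouping identity \eqref{eqn-star}; everything else reduces to the observation that the coefficient tuples $(a^{3^j}b)_j$ fill out $\gf(q)^m$, which rests on the nonsingularity of the Moore matrix. The step I would watch most carefully is the interplay between $\gf(3)$-spans and $\gf(q)$-spans in the last paragraph: one must note that each single-$a$ family $\{b\,\phi(a):b\in\gf(q)\}$ is automatically a $\gf(q)$-line, so that the $\gf(3)$-span $V$ in fact absorbs the full $\gf(q)$-span of $\{\phi(a)\}$, which is all of $\gf(q)^m$. I expect no difficulty in the forward inclusion, since it is an immediate consequence of \eqref{eqn-star}.
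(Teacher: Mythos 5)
Your proof is correct, and it takes a genuinely different route from the paper's. The paper proves Lemma \ref{lem-july1911} by fixing a normal basis, expanding both $x^{3^j+1}$ and the products $\tr(\alpha^{3^u}x)\tr(\alpha^{3^v}x)$ into explicit quadratic forms in the coordinates $x_0,\dots,x_{m-1}$, and then asserting that the two resulting spanning sets of functions coincide; for Lemma \ref{lem-july1912} it simply says the argument is similar and omits it. You instead prove the exact closed-form identity $\tr(ax^2)\tr(bx)=\sum_{j=0}^{m-1}\tr\bigl(a^{3^j}b\,x^{2\cdot 3^j+1}\bigr)$ by regrouping the $m^2$ Frobenius-power terms along the difference index $j=(k-l)\bmod m$, which gives the inclusion $L\subseteq R$ at once, and you then reduce the reverse inclusion to the surjectivity of the $\gf(3)$-linear coefficient map, settled by observing that each fixed $a$ contributes the full $\gf(q)$-line through $\phi(a)=(a^{3^j})_j$ and that the Moore matrix of a $\gf(3)$-basis is nonsingular. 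Your version buys an explicit formula for the product (which makes the $j=0$ term visibly degenerate to a linear function, consistent with the remark in the proof of Theorem \ref{thm-main3}) and replaces the paper's unverified ``it can be verified that the two sets of functions are the same'' step with a clean linear-algebra argument; the paper's coordinate computation, on the other hand, is more uniform across the three Lemmas \ref{lem-july1911}--\ref{lem-july1913}. The only points worth double-checking in your write-up are the ones you already flagged: the reindexing $(k,l)\mapsto(j,u)$ uses $a^{3^m}=a$ and $x^{3^m}=x$, and the passage from the $\gf(3)$-span $V$ to all of $\gf(q)^m$ rests on each single-$a$ family being closed under $\gf(q)$-scaling; both are sound.
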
 

\begin{proof}
The proof is similar to that of Lemma \ref{lem-july1911} and is omitted. 
\end{proof}

\begin{lemma}\label{lem-july1913} 
Let $m \geq 2$. Then 
\begin{eqnarray}\label{eqn-july1913}
\left\langle \{ \tr(ax^2)\tr(bx^2): a, b \in \gf(q)  \} \right\rangle = 
\left\{  \sum_{j=0}^{m-1} \tr(c_j x^{2\times 3^j+2}): c_i \in \gf(q)    \right\}.
\end{eqnarray}
\end{lemma}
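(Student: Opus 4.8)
The plan is to avoid the normal-basis bookkeeping entirely and instead derive the statement from Lemma~\ref{lem-july1911} by the substitution $x\mapsto x^2$. The starting observation is the elementary exponent identity
\[
x^{2\times 3^j+2}=\left(x^2\right)^{3^j+1},\qquad 0\le j\le m-1,
\]
so that $\tr\!\left(c\,x^{2\times 3^j+2}\right)=\tr\!\left(c\,(x^2)^{3^j+1}\right)$, and likewise $\tr(ax^2)\tr(bx^2)$ is obtained from $\tr(ay)\tr(by)$ by setting $y=x^2$. I also note that the right-hand side of \eqref{eqn-july1913} is nothing but the $\gf(3)$-span $\left\langle\{\tr(c\,x^{2\times 3^j+2}):c\in\gf(q),\,0\le j\le m-1\}\right\rangle$, since $\tr$ is $\gf(3)$-linear in its argument; the same remark identifies the right-hand side of \eqref{eqn-july1911} with the corresponding span.

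Next I would introduce the $\gf(3)$-linear operator $\sigma$ on the space of all functions from $\gf(q)$ to $\gf(3)$ defined by $\sigma(f)(x)=f(x^2)$. It is linear because precomposition by the fixed map $x\mapsto x^2$ respects both addition of functions and multiplication by scalars of $\gf(3)$. By the identity above, its action on the two relevant spanning families is
\[
\sigma\bigl(y\mapsto\tr(ay)\tr(by)\bigr)=\bigl(x\mapsto\tr(ax^2)\tr(bx^2)\bigr),\qquad
\sigma\bigl(y\mapsto\tr(c\,y^{3^j+1})\bigr)=\bigl(x\mapsto\tr(c\,x^{2\times 3^j+2})\bigr).
\]

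The argument then closes in one line. Put $V:=\left\langle\{\tr(ay)\tr(by):a,b\in\gf(q)\}\right\rangle$. Lemma~\ref{lem-july1911} asserts that $V$ is one and the same subspace as $\left\langle\{\tr(c\,y^{3^j+1}):c\in\gf(q),\,0\le j\le m-1\}\right\rangle$. Because the image of a subspace under a linear map is the span of the images of any spanning set, applying $\sigma$ to the first description yields $\sigma(V)=\left\langle\{\tr(ax^2)\tr(bx^2)\}\right\rangle$, which is the left-hand side of \eqref{eqn-july1913}, whereas applying $\sigma$ to the second description yields $\sigma(V)=\left\langle\{\tr(c\,x^{2\times 3^j+2})\}\right\rangle$, which is its right-hand side. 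Since $\sigma(V)$ is a single well-defined subspace, the two sides are equal.

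The one point that I expect to require care---the main obstacle---is that $x\mapsto x^2$ is \emph{not} a bijection of $\gf(q)$, so one cannot justify the reduction by an ordinary change of variable. The argument sidesteps this by working entirely with the substitution \emph{operator} $\sigma$ on the finite-dimensional function space, for which non-injectivity of $x\mapsto x^2$ is irrelevant: $\sigma$ is still a well-defined $\gf(3)$-linear map, and that is all the proof uses. Should a fully self-contained computation be preferred, one can instead imitate the normal-basis proof of Lemma~\ref{lem-july1911} line by line, expanding both $\tr(\alpha^{3^u}x^{2\times 3^j+2})$ and $\tr(\alpha^{3^u}x^2)\tr(\alpha^{3^v}x^2)$ in the normal-basis coordinates $x_i$ and checking that the two resulting families of quartic forms span the same space; this is routine but markedly longer, which is why I favour the substitution route.
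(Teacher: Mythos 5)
Your proof is correct, and it takes a genuinely different route from the paper's. The paper simply states that the proof is ``similar to that of Lemma~\ref{lem-july1911}'' --- i.e., one is expected to redo the normal-basis computation, expanding $\tr(\alpha^{3^u}x^{2\times 3^j+2})$ and $\tr(\alpha^{3^u}x^2)\tr(\alpha^{3^v}x^2)$ in coordinates and matching the two families of quartic forms. You instead deduce the statement formally from Lemma~\ref{lem-july1911} by precomposition with squaring: the identity $x^{2\times 3^j+2}=(x^2)^{3^j+1}$ shows both spanning families of \eqref{eqn-july1913} are the $\sigma$-images of the corresponding families in \eqref{eqn-july1911}, and since $\sigma(f)(x)=f(x^2)$ is a well-defined $\gf(3)$-linear operator on $\gf(3)^{\gf(q)}$, the image of a subspace is the span of the images of any spanning set, so equality of the two spans is inherited. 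Your explicit handling of the non-bijectivity of $x\mapsto x^2$ is exactly the right point to flag: injectivity is never needed, only linearity of $\sigma$. What your approach buys is brevity and a conceptual explanation of why the exponents $2\times 3^j+2$ appear (they are the squares' images of $3^j+1$); what it gives up is uniformity --- the same substitution trick does not apply to the mixed case $\tr(ax^2)\tr(bx)$ of Lemma~\ref{lem-july1912}, where the paper's coordinate template still has to be carried out, so the paper's ``similar proof'' remark covers both omitted lemmas at once while your reduction is specific to this one.
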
 

\begin{proof}
The proof is similar to that of Lemma \ref{lem-july1911} and is omitted. 
\end{proof}

We have now the following trace representation of the code $\C_3(\bD_d(\C(m,3)))$. 

\begin{theorem}\label{thm-main3}
Let $m \geq 4$ be odd. Then $\C_3(\bD_d(\C(m,3)))$ is given by 
\begin{eqnarray*}
\left\{  
\sum_{i=0}^{m-1} \tr(a_ix^{2 \cdot 3^i +2}) + 
\sum_{j=0}^{m-1} \tr(b_jx^{2 \cdot 3^j +1}) + 
\sum_{\ell=0}^{m-1} \tr(c_\ell x^{ 3^\ell +1}) + u: 
\ a_i, b_j, c_\ell \in \gf(q), \ u \in \gf(3)  
\right\} .  
\end{eqnarray*}
\end{theorem}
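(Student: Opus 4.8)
The plan is to read off the trace representation directly from Theorem~\ref{thm-base1} together with the three monomial-identification Lemmas~\ref{lem-july1911}, \ref{lem-july1912} and~\ref{lem-july1913}, and then to argue that one of the resulting summands is redundant. By Theorem~\ref{thm-base1} (whose hypothesis $m\ge 5$ odd is met, since $m\ge 4$ odd forces $m\ge 5$), the code $\C_3(\bD_d(\C(m,3)))$ is the $\gf(3)$-linear span of the functions $\tr(ax^2)\tr(bx^2)$, $\tr(ax^2)\tr(bx)$, $\tr(ax)\tr(bx)$, $\tr(bx)$ and $1$ with $a,b \in \gf(q)$; hence it equals the subspace sum $\langle \{\tr(ax^2)\tr(bx^2)\}\rangle + \langle \{\tr(ax^2)\tr(bx)\}\rangle + \langle \{\tr(ax)\tr(bx)\}\rangle + \langle \{\tr(bx)\}\rangle + \langle \{1\}\rangle$. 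First I would apply Lemmas~\ref{lem-july1913}, \ref{lem-july1912} and~\ref{lem-july1911} to replace the first three summands by $\{\sum_{i=0}^{m-1}\tr(a_i x^{2\cdot 3^i+2})\}$, $\{\sum_{j=0}^{m-1}\tr(b_j x^{2\cdot 3^j+1})\}$ and $\{\sum_{\ell=0}^{m-1}\tr(c_\ell x^{3^\ell+1})\}$, and observe that $\langle\{1\}\rangle$ is just the space of constants $u \in \gf(3)$. These four families are exactly the summands displayed in the theorem, so everything reduces to showing that the extra term $\langle\{\tr(bx)\}\rangle$ contributes nothing new.

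The one step that is not pure bookkeeping is the disappearance of this linear summand. Here I would use that the trace is Frobenius-invariant, $\tr(z)=\tr(z^3)$ for all $z \in \gf(q)$; applied with $z=bx$ it gives $\tr(bx)=\tr(b^3 x^3)$. Since $2\cdot 3^0+1 = 3$, the function $\tr(b^3 x^3)$ is precisely the $j=0$ summand in the family $\{\sum_{j=0}^{m-1}\tr(b_j x^{2\cdot 3^j+1})\}$ produced by Lemma~\ref{lem-july1912}. Thus every linear function $\tr(bx)$ already lies in $\langle\{\tr(ax^2)\tr(bx)\}\rangle$, so $\langle\{\tr(bx)\}\rangle$ is contained in the sum of the remaining summands and may be dropped without changing the total span.

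To close the argument I would record both inclusions. For the inclusion of the claimed set into the code, each of the four retained subspace families lies in $\C_3(\bD_d(\C(m,3)))$ because its generating products do: $\tr(ax^2)\tr(bx^2)$ by Lemma~\ref{lem-july195}, $\tr(ax^2)\tr(bx)$ by Lemma~\ref{lem-july191}, $\tr(ax)\tr(bx)$ by Lemma~\ref{lem-july193}, and $1$ by Lemma~\ref{lem-july197}. For the reverse inclusion, each of the five generators furnished by Theorem~\ref{thm-base1} lies in the claimed set, using the identification lemmas for the three products, triviality for the constant, and the identity $\tr(bx)=\tr(b^3x^3)$ for the linear generator. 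Combining the two inclusions yields the stated equality. I do not anticipate a genuine obstacle: the proof is an assembly of results already in hand, and the only truly new observation is the Frobenius identity $\tr(bx)=\tr(b^3x^3)$ that folds the degree-one part into the degree-$(2\cdot 3^j+1)$ family, explaining why no separate linear term appears in the final description.
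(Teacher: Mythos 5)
Your proposal is correct and follows essentially the same route as the paper: Theorem~\ref{thm-base1} plus Lemmas~\ref{lem-july1911}--\ref{lem-july1913} to convert the product generators into the trace-monomial families, with the key observation that $\{\tr(b_0x^{2\cdot 3^0+1}): b_0\in\gf(q)\}=\{\tr(bx): b\in\gf(q)\}$ (your Frobenius identity $\tr(bx)=\tr(b^3x^3)$) absorbing the linear generators. The paper states this in one line; you have merely written out the same argument in full.
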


 \begin{proof}
 Note that 
 $$ 
 \left\{  \tr(b_0x^{2 \cdot 3^0 +1}): b_0  \in \gf(q)\right\}=\left\{ \tr(bx): b \in \gf(q)  \right\}. 
 $$ 
 The desired conclusion then follows from Theorem \ref{thm-base1}, Lemmas \ref{lem-july1911}, 
\ref{lem-july1912}, and \ref{lem-july1913}.   
\end{proof}

We are now ready to prove the following main result of this section. 

\begin{theorem}\label{thm-mainlast} 
For each odd $m \geq 5$, we have 
$$ 
\dim(\C_3(\bD_d(\C(m,3))))=2m^2+1  
$$ 
and the minimum distance $d(\C_3(\bD_d(\C(m,3))))$ of the code $\C_3(\bD_d(\C(m,3)))$ is lower 
bounded by $3^{m-2}$. 
\end{theorem}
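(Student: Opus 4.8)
The plan is to establish the two claims separately, since the dimension is an exact count and the minimum distance is a lower bound that will come from an embedding into a known Reed--Muller code.

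\medskip

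\noindent\textbf{Computing the dimension.}
By Theorem~\ref{thm-main3}, the code $\C_3(\bD_d(\C(m,3)))$ is spanned by the functions
$\tr(a_ix^{2\cdot 3^i+2})$, $\tr(b_jx^{2\cdot 3^j+1})$, $\tr(c_\ell x^{3^\ell+1})$, together with the constants.
The first thing I would do is organize these spanning functions by the $3$-weight of the exponents appearing in them.
The exponents $3^\ell+1$ (for $0\le\ell\le m-1$) all have $3$-weight $2$; the exponents $2\cdot 3^j+1$ have $3$-weight $3$; and the exponents $2\cdot 3^i+2$ have $3$-weight $4$ (reducing indices modulo the cyclotomic action so that exponents are taken mod $q-1$). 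The constant term contributes $3$-weight $0$.
The key structural fact is that the span of each family $\{\sum_\ell \tr(c_\ell x^{e})\}$ is a direct sum over the distinct $3$-cyclotomic cosets, so the dimension of each homogeneous piece equals $m$ times the number of relevant cosets, and the three homogeneous pieces together with the constant are linearly independent because they live in distinct weight classes. I would count: there are $m$ exponents of the form $3^\ell+1$, $m$ of the form $2\cdot 3^j+1$, and $m$ of the form $2\cdot 3^i+2$, each contributing a block of dimension $m$ under the normal-basis parametrization established in Lemmas~\ref{lem-july1911}--\ref{lem-july1913}; this gives $3m^2$ naively. The target $2m^2+1$ tells me that overlaps and coincidences among cyclotomic cosets must reduce the count, and isolating exactly which exponents collapse (for instance $2\cdot 3^j+1$ and $3^\ell+1$ coinciding under the $x\mapsto x^3$ Frobenius action, or self-conjugate exponents with smaller coset size) is the arithmetic heart of the dimension computation.

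\medskip

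\noindent\textbf{The Reed--Muller embedding and the distance bound.}
For the minimum-distance bound, I would show that every exponent appearing in Theorem~\ref{thm-main3} has $3$-weight at most $4$, hence every spanning function lies in the $4$th-order punctured generalized Reed--Muller code $\cR_3(4,m)^*$ (after deleting the coordinate $x=0$ or working with the extended code $\cR_3(4,m)$, which is the natural length-$3^m$ object since $\C_3(\bD_d(\C(m,3)))$ has length $q=3^m$). The defining property in~\eqref{eqn-generatorpolyPGRMcode}--Theorem~\ref{thm-GPRMcode} is precisely that $\cR_3(\ell,m)^*$ consists of the trace functions whose exponents have $3$-weight strictly greater than $(q-1)m-\ell$; dualizing, the monomials $x^e$ with $\wt_3(e)\le \ell$ span the order-$\ell$ generalized Reed--Muller code. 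Since the maximum $3$-weight here is $4$, every codeword of $\C_3(\bD_d(\C(m,3)))$ is a codeword of $\cR_3(4,m)$, i.e. $\C_3(\bD_d(\C(m,3)))$ is a subcode. The minimum distance of $\cR_3(4,m)$ is given by the formula in Theorem~\ref{thm-GPRMcode} with $q=3$ and $\ell=4$: writing $4=\ell_1(q-1)+\ell_0=2\cdot 2+0$, so $\ell_1=2$, $\ell_0=0$, one gets minimum weight $(q-\ell_0)q^{m-\ell_1-1}=3\cdot 3^{m-3}=3^{m-2}$. Because any subcode has minimum distance at least that of the ambient code, this immediately yields $d(\C_3(\bD_d(\C(m,3))))\ge 3^{m-2}$.

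\medskip

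\noindent\textbf{Main obstacle.}
The distance bound is essentially a one-line consequence of the embedding once the $3$-weight bookkeeping is in place, so the genuine difficulty is the exact dimension $2m^2+1$. The delicate point is the precise accounting of linear dependencies among the three families of trace monomials: I must verify that the weight-$2$, weight-$3$, and weight-$4$ pieces are mutually independent (which follows from grading by $3$-weight), and then determine the exact dimension of each graded piece by counting cyclotomic cosets and correctly handling degenerate cases---exponents whose coset has size smaller than $m$, and the reduction of exponents modulo $q-1$ when $2\cdot 3^i+2$ or $2\cdot 3^j+1$ wraps around. Getting $3m^2$ down to $2m^2+1$ means one full family of size $m^2$ collapses into the others (or into redundancies), and pinning down exactly that collapse---most plausibly that the weight-$3$ exponents $2\cdot 3^j+1$ are, up to Frobenius, already accounted for, leaving effectively two independent families of dimension $m^2$ plus the constant---is where the careful work lies and where I expect to spend the bulk of the effort.
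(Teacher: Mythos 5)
Your minimum-distance argument is sound and is exactly the paper's: the exponents occurring in Theorem~\ref{thm-main3} all have $3$-weight at most $4$, so the code embeds in $\cR_3(4,m)$, whose minimum weight with $\ell=4=2\cdot(q-1)+0$ is $3\cdot 3^{m-3}=3^{m-2}$. The problem is the dimension claim, which you do not actually prove: you arrive at a naive count of $3m^2$ and then reason backwards from the stated answer that ``overlaps and coincidences must reduce the count,'' explicitly deferring the identification of those coincidences. That identification is the entire content of the dimension half of the theorem, so as written the proposal establishes only the distance bound.

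Moreover, the two structural guesses you do commit to are wrong. First, it is not true that every exponent $2\cdot 3^i+2$ has $3$-weight $4$: for $i=0$ one gets $2\cdot 3^0+2=4=3+1$, which has $3$-weight $2$ and coincides with a member of the family $\{3^\ell+1\}$. So your claimed linear independence of the three families ``by grading in $3$-weight'' fails at precisely the point where a genuine overlap occurs. Second, the collapse from $3m^2$ to $2m^2$ does not come from the weight-$3$ family $\{2\cdot 3^j+1\}$ being redundant --- those $m$ exponents lie in $m$ pairwise distinct cyclotomic cosets and contribute a full $m^2$. The reduction comes from the other two families: since $3^{m-i}(3^i+1)\equiv 3^{m-i}+1 \pmod{3^m-1}$, the exponents $3^i+1$ and $3^{m-i}+1$ are Frobenius-conjugate, so $\{3^i+1\}$ yields only $(m+1)/2$ distinct cosets, and likewise $\{2\cdot 3^i+2\}=\{2(3^i+1)\}$ yields $(m+1)/2$; these two collections share exactly one coset (the one containing $4$). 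The paper's proof makes this precise by passing, via the Delsarte theorem, to a cyclic code of length $3^m-1$ whose dimension is $\lvert\bigcup_{i\in J}S_i\rvert$ and verifying that all cosets involved have size $m$, giving $\left(\frac{m+1}{2}+\frac{m+1}{2}-1+m\right)m=2m^2$, plus $1$ for the constant. To complete your argument you would need to carry out this coset bookkeeping (or an equivalent linear-independence analysis of the trace monomials), and correct the two misidentifications above.
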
 

\begin{proof}
Put $n=3^m-1$. Define 
\begin{eqnarray*}
A &=& \{-(2\times 3^i+2): 0 \leq i \leq m-1\} \subset \Z_n,  \\
A_1 &=& \{-(2\times 3^i+2): 0 \leq i \leq (m-1)/2\} \subset \Z_n,  \\
B &=& \{-(2\times 3^i+1): 0 \leq i \leq m-1\} \subset \Z_n,  \\
C &=& \{-( 3^i+1): 0 \leq i \leq m-1\} \subset \Z_n,  \\
C_1 &=& \{-(3^i+1): 0 \leq i \leq (m-1)/2\} \subset \Z_n. 
\end{eqnarray*}  
Put $J=A\cup B \cup C$. 
Let $\beta$ be a primitive element of $\gf(q)$. Denote by $\m_i(x)$ the minimal polynomial 
of $\beta^i$ over $\gf(3)$. Define 
$$ 
H(x)=\lcm(\m_i(x): i \in J\},  
$$ 
where $\lcm$ denotes the least common multiple of a set of polynomials. 
Let $\C_3^m$ denote the cyclic code over $\gf(3)$ of length $n$ with parity-check polynomial 
$H(x)$. By the Delsarte Theorem \cite{Delsarte75} and Theorem \ref{thm-main3}, $\C_3(\bD_d(\C(m,3)))$ is 
permutation-equivalent to the 
augmented code of the extended code of  $\C_3^m$. As a result, 
$$ 
\dim(\C_3(\bD_d(\C(m,3))))=\dim(\C_3^m)+1. 
$$
Below we compute the dimension of $\C_3^m$. 

For each $i \in \Z_n$, the cyclotomic coset modulo $n$ containing $i$ is defined by 
$$ 
S_i=\{i 3^j \bmod n: 0 \leq j \leq m-1\} \subset \Z_n. 
$$  
The dimension of $\C_3^m$ is given as 
$$ 
\dim(\C_3^m) =\left|  \bigcup_{i \in A \cup B \cup C} S_i \right|
$$
The following statements can be verified: 
\begin{itemize}
\item $|S_i|=m$ for all $i \in J$. 
\item $S_i \cap S_j=\emptyset$ for all pairs of distinct $i$ and $j$ in $B$.  
\item $S_i \cap S_j=\emptyset$ for all pairs of distinct $i$ and $j$ in $A_1$, 
          and every element in $A \setminus A_1$ is contained in some $S_i$ for $i \in A_1$. 
\item $S_i \cap S_j=\emptyset$ for all pairs of distinct $i$ and $j$ in $C_1$, 
          and every element in $C \setminus C_1$ is contained in some $S_i$ for $i \in C_1$.    
\item $S_i \cap S_j=\emptyset$ for all $i \in B$ and $j$ in $A_1$. 
\item $S_i \cap S_j=\emptyset$ for all $i \in B$ and $j$ in $C_1$. 
\item $S_i \cap S_j=\emptyset$ for all $i \in A_1$ and $j$ in $C_1$, except $(i,j)=(0,1)$, in which case 
          the two cosets are the same.                   
\end{itemize} 
We then deduce that 
$$ 
\dim(\C_3^m) =(|A_1|+|C_1|-1+|B|)m=\left( \frac{m+1}{2} + \frac{m+1}{2} -1 +m \right)m=2m^2.  
$$
The desired conclusion on the dimension of $\C_3(\bD_d(\C(m,3)))$ then follows. 

By Theorem \ref{thm-base1} the code $\C_3(\bD_d(\C(m,3)))$ is a subcode 
of the fourth-order generalized Reed-Muller code $\cR_3(4, m)$. 
By Theorem \ref{thm-DGM70}, the fourth-order generalised Reed-Muller code 
$\cR_3(4, m)$ has dimension 
$$ 
\kappa=\sum_{i=0}^4 \sum_{j=0}^{m} (-1)^j \binom{m}{j} \binom{i-3j+m-1}{i-3j} 
= \binom{m+3}{4}+\binom{m+2}{3} - \frac{(m-1)m}{2}+1 
> 2m^2+1   
$$ 
and minimum distance $3^{m-2}$. The lower bound on the code  $\C_3(\bD_d(\C(m,3)))$ 
then follows. 
\end{proof}

The lower bound $3^{m-2}$ on the minimum distance of $\C_3(\bD_d(\C(m,3)))$ is not very 
tight. It would be nice if the following open problem can be settled. 

\begin{open} 
Determine the minimum distance of the ternary code  $\C_3(\bD_d(\C(m,3)))$. 
\end{open} 

We remark that the conclusions of Theorem \ref{thm-mainlast} are also true for even $m \geq 6$. 
The proofs of the lemmas and theorems for odd $m$ can be modified slightly to prove the 
conclusions of Theorem \ref{thm-mainlast} for even $m$.  The details are left to the reader. 
For $m \in \{2,3,4\}$, the parameters of  $\C_3(\bD_d(\C(m,3)))$ are given in the following example.  
The example  indicates that the code $\C_3(\bD_d(\C(m,3)))$ has good parameters. 
The dimensions of these codes agree with the formula $2m^2+1$. 

\begin{example}\label{ex25}
Let $d$ denote the minimum weight of $\C(m,3)$. The parameters of 
the code $\C(m,3)$ and $\C_3(\bD_d(\C(m,3)))$ for  $m=2, 3, 4$ are listed below: 
\begin{eqnarray}
\begin{array}{rrr}
m & \C(m,3)    &  \C_3(\bD_d(\C(m,3))) \\
2 & [9,5,4]    & [9,9,1] \\
3 & [27,7,15]  & [27, 19, 6] \\
4 & [81,9,48]  & [81, 33, 21] 
\end{array}
\end{eqnarray}  
The ternary code $\C(3,3)$ is distance-optimal  \cite{Grassl} and has weight distribution 
$$ 
1+702z^{15} +780z^{18}+702z^{21}+2z^{27}. 
$$ 
The ternary code $ \C_3(\bD_d(\C(3,3)))$ is also distance-optimal   \cite{Grassl} and has weight distribution 
\begin{eqnarray*}
&& 1+ 5148z^{6} + 14742z^7+ 84240z^8 + 370500 z^9 + 1314144z^{10} +  
4081428z^{11} + \\ 
&& 10838880z^{12} +  25050870z^{13} +  49975380z^{14} +  87147918z^{15} +  129957048z^{16} + \\
&& 
168370488z^{17} + 187697640 z^{18} + 177251490 z^{19} + 141674832 z^{20} +   94909698 z^{21} + \\
&& 51504336 z^{22}+ 22428900 z^{23} + 
7492680 z^{24} + 1796418 z^{25} +   
273780 z^{26} +  20906 z^{27}.  
\end{eqnarray*}
The weight distributions of $\C(3,3)$ and 
$ \C_3(\bD_d(\C(3,3)))$ demonstrate a big difference between the two codes. 
\end{example}

The following problem would be challenging.  

\begin{open} 
Determine the parameters of the code $\C_p(\bD_d(\C(m,p)))$ for odd $p>3$. 
\end{open} 

We point out that the code $\C_3(\bD_d(\C(m,3)))$ is affine-invariant, thus it holds $2$-designs. 
Therefore, the following open problem would be interesting. 

\begin{open} 
Determine the parameters of the $2$-designs held in the code $\C_3(\bD_d(\C(m,3)))$. 
\end{open}

\section{Summary and concluding remarks} 

The contribution of this paper is the study of the ternary codes 
 $\C_3(\bD_d(\C(m,3)))$ carried out in Section 
 \ref{sec-ternarydesigncode}, where the dimensions of the codes were determined,
 and a lower bound on the 
 minimum distance of the codes was proved. We also proved that the 
codes $\C_3(\bD_d(\C(m,3)))$ are subcodes 
 of the fourth-order generalized Reed-Muller ternary codes.
This shows that the code $\C_3(\bD_d(\C(m,3)))$ is 
 much more complicated than the original code $\C(m,3)$, which is defined by quadratic functions $\tr(ax^2+bx)+c$. 
 The difference between the two codes $\C(m,3)$ and $\C_3(\bD_d(\C(m,3)))$ is also seen in their dimensions.        
The codes $\C_3(\bD_d(\C(m,3)))$ are affine-invariant,
hence the codeworfds of any nonzero weight support $2$-designs.

Several open problems were presented in this paper. 
The reader is cordially invited to settle them. 
The $p$-rank of $t$-designs, i.e., the dimension of the corresponding codes, 
can be used to classify $t$-designs of certain type. 
For example, the 2-rank and 3-rank of Steiner triple and quadruple
 systems were intensively studied and 
employed for counting and classifying Steiner triple and quadruple systems
 \cite{JMTW}, 
 \cite{JT},
 \cite{Osuna},
 \cite{Tsts}, \cite{Tsqs},
\cite{Z16}, \cite{ZZ12}, \cite{ZZ13}, \cite{ZZ13a}.



\end{document}